\newtheorem{thm}{\hskip\parindent Theorem}[section]
\newtheorem{lem}[thm]{\hskip\parindent Lemma}
\newtheorem{cor}[thm]{\hskip\parindent Corollary}
\theoremstyle{definition}
\DeclareMathOperator{\wt}{wt}
\begin{document}

\title{Differentiation of Genus $4$ Hyperelliptic~Functions}
\author{
\vspace{-3pt}
V.\,M.~Buchstaber, E.\,Yu.~Bunkova}
\address{Steklov Mathematical Institute of Russian Academy of Sciences}
\email{buchstab@mi-ras.ru, bunkova@mi-ras.ru}
\thanks{Supported in part by RAS~program "Nonlinear dynamics: fundamental~problems and applications", RFBR project 17-01-00366 A,
and Young Russian Mathematics award}

\begin{abstract}
\vspace{-8pt}

In this work we give an explicit solution to the problem of differentiation of hyperelliptic functions in genus $4$ case.
It is a genus $4$ analogue of the classical result of F.~G.~Frobenius and L.~Stickelberger \cite{FS} in the case of elliptic functions. An explicit solution in the genus $2$ case was given in \cite{B2}.  An explicit solution in the genus $3$ case was given in \cite{B3}.
\end{abstract}

\maketitle

\vspace{-20pt}

\section*{Introduction} 

For a meromorphic function $f$ in $\mathbb{C}^g$ the vector $\omega \in \mathbb{C}^g$ is a period if $f(z+\omega) = f(z)$ for all~$z \in \mathbb{C}^g$.
If the periods of a meromorphic function $f$ form a lattice $\Gamma$ of~rank~$2g$ in~$\mathbb{C}^g$, then $f$ is called an \emph{Abelian function}.
Therefore an Abelian function is a meromorphic~function on the complex torus $T^g = \mathbb{C}^g/\Gamma$. We denote the coordinates in~$\mathbb{C}^g$ by~$z = (z_1, z_3, \ldots, z_{2g-1})$.

We consider hyperelliptic curves of genus $g$ in the model
\[
\mathcal{V}_\lambda = \{(x, y)\in\mathbb{C}^2 \colon
y^2 = x^{2g+1} + \lambda_4 x^{2 g - 1}  + \lambda_6 x^{2 g - 2} + \ldots + \lambda_{4 g} x + \lambda_{4 g + 2}\}. 
\]
The curve depends on the parameters $\lambda = (\lambda_4, \lambda_6, \ldots, \lambda_{4 g}, \lambda_{4 g + 2}) \in \mathbb{C}^{2 g}$.
Let $\mathcal{B} \subset \mathbb{C}^{2g}$ be the subspace of parameters such that the curve $\mathcal{V}_{\lambda}$ is nonsingular for~$\lambda \in \mathcal{B}$.
Then we~have $\mathcal{B} = \mathbb{C}^{2g} \backslash \Sigma$, where $\Sigma$ is~the discriminant hypersurface of the~universal curve.

A \emph{hyperelliptic function of genus} $g$ (see \cite{B2, BEL-12, BEL18}) is a meromorphic function in $\mathbb{C}^g \times \mathcal{B}$,
such that for each $\lambda \in \mathcal{B}$ it's restriction on $\mathbb{C}^g \times \lambda$
is an Abelian function, where the torus~$T^g = \mathbb{C}^g/\Gamma$ is~the~Jacobian $\mathcal{J}_\lambda$ of the curve~$\mathcal{V}_\lambda$.
We denote by $\mathcal{F}$ the field of hyperelliptic functions of genus $g$. For the properties of this field, see \cite{BEL-12, BEL18}. 

We consider the problem of constructing the Lie algebra of derivations of $\mathcal{F}$, i.e. to~find~$3g$ independent differential operators $\mathcal{L}$ such that $\mathcal{L} \mathcal{F} \subset \mathcal{F}$. The exposition to the problem, as well as a general approach to the solution was developed in \cite{BL0, BL}. An~overview is given in \cite{BEL18}. In~\cite{FS, B2, B3} an explicit solution to this problem has been obtained for $g=1,2,3$. In~the present work we give an explicit answer to this problem in the genus $g=4$ case. It~is based on the results of \cite{BB19}.

Let $\mathcal{U}$ be the total space of the bundle $\pi: \mathcal{U} \to \mathcal{B}$ with fiber the Jacobian~$\mathcal{J}_\lambda$ of the curve $\mathcal{V}_\lambda$ over $\lambda \in \mathcal{B}$.
Thus, we can say that hyperelliptic functions of genus~$g$ are meromorphic functions in~$\mathcal{U}$.
By Dubrovin--Novikov Theorem~\cite{DN}, there is a~birational isomorphism between $\mathcal{U}$ and the complex linear space~$\mathbb{C}^{3g}$.

We use the theory of hyperelliptic Kleinian functions (see \cite{ BEL-12, BEL, BEL-97, Baker}, and~\cite{WW} for elliptic functions).
Take the coordinates
$(z, \lambda) = (z_1, z_3, \ldots, z_{2 g -1},$ $\lambda_4, \lambda_6, \ldots, \lambda_{4 g}, \lambda_{4 g + 2})$
in $\mathbb{C}^g \times \mathcal{B} \subset \mathbb{C}^{3g}$.
Let $\sigma(z, \lambda)$ be the hyperelliptic sigma function (or elliptic function in case of genus $g=1$).
We set $\partial_k = \frac{\partial}{\partial z_k}$.
Following~\cite{B2, B3, BEL18}, we use the notation
\begin{equation} \label{note}
\zeta_{k} = \partial_k \ln \sigma(z, \lambda), \qquad
\wp_{k_1, \ldots, k_n} = - \partial_{k_1} \cdots \partial_{k_n} \ln \sigma(z, \lambda), 
\end{equation}
where $n \geqslant 2$, $k_s \in \{ 1, 3, \ldots, 2 g - 1\}$. 
The functions $\wp_{k_1, \ldots, k_n}$ are hyperelliptic functions. The field $\mathcal{F}$ is the field of fractions of the ring of polynomials $\mathcal{P}$ generated by the functions $\wp_{k_1, \ldots, k_n}$, where $n \geqslant 2$, $k_s \in \{ 1, 3, \ldots, 2 g - 1\}$. We note that the~derivations of $\mathcal{F}$ that we~construct are derivations of $\mathcal{P}$.

\section{Lie algebra of vector fields in $\mathcal{B}$} \label{S1}

Let $g \in \mathbb{N}$. Following \cite[Section 4]{A}, we consider $\mathbb{C}^{2g+1}$ with coordinates $(\xi_1, \ldots, \xi_{2g+1})$. Let $H$ be the hyperplane in $\mathbb{C}^{2g+1}$ given by the equation $\sum_{k=1}^{2g+1} \xi_k = 0$. The permutation group $S_{2g+1}$ of coordinates in $\mathbb{C}^{2g+1}$ corresponds to the action of the group $A_{2g}$ on $H$. We~associate a vector $\xi \in H$ with the polynomial
\begin{equation} \label{pol1}
\prod_k (x - \xi_k) = x^{2g+1} + \lambda_4 x^{2 g - 1}  + \lambda_6 x^{2 g - 2} + \ldots + \lambda_{4 g} x + \lambda_{4 g + 2},
\end{equation}
where  $\lambda = (\lambda_4, \lambda_6, \ldots, \lambda_{4 g}, \lambda_{4 g + 2}) \in \mathbb{C}^{2g}$. The orbit space $H/A_{2g}$ is identified with $\mathbb{C}^{2g}$.
We denote the variety of regular orbits in $\mathbb{C}^{2g}$ by $\mathcal{B}$.
Thus, $\mathcal{B} \subset \mathbb{C}^{2g}$ is the subspace of parameters $\lambda$ such that the polynomial \eqref{pol1} has no multiple roots, and $\mathcal{B} = \mathbb{C}^{2g} \backslash \Sigma$, where~$\Sigma$ is the discriminant hypersurface.

The gradient of any $A_{2g}$-invariant polynomial determines a vector field in~$\mathbb{C}^{2g}$ that is~tangent to the discriminant hypersurface $\Sigma$ of the genus $g$ hyperelliptic curve \cite{A, BPol}. Choosing a~multiplicative basis in the ring of $A_{2g}$-invariant polynomials, we can construct the corresponding $2g$ polynomial vector fields, which are linearly independent at each point in $\mathcal{B}$. These fields do not commute and determine a \emph{nonholonomic frame} in~$\mathcal{B}$.

In \cite[Section 4]{A} an approach to constructing an infinite-dimensional Lie algebra of such fields based on the convolution of invariants operation is described.
In the present work, we consider the fields
\[
 L_{0}, \quad L_{2}, \quad L_{4}, \quad  \ldots, \quad L_{4 g - 2},
\]
corresponding to the multiplicative basis in the ring of $A_{2g}$-invariants, that is composed of elementary symmetric functions. The structure polynomials of the convolution of invariants operation in this basis were obtained by D.\,B.\,Fuchs, see~\cite[Section~4]{A}. Note that the~nonholonomic frame in~$\mathcal{B}$ corresponding to the multiplicative basis in the ring of~$A_{2g}$-invariants composed of Newton polynomials is used in the works of V.\,M.\,Buchstaber and~A.\,V.\,Mikhailov, see
\cite{BMinf}. 

We express explicitly the vector fields $\{L_{2k}\}$ in the coordinates $\lambda$. For convenience, we~assume that $\lambda_s = 0$ for all $s \notin \{0,4,6, \ldots, 4 g, 4 g + 2\}$ and $\lambda_0 = 1$.

For $k, m \in \{ 1, 2, \ldots, 2 g\}$, $k\leqslant m$, we set
\[
 T_{2k, 2m} = \sum_{s=0}^{k-1} 2 (k + m - 2 s) \lambda_{2s} \lambda_{2 (k+m-s)}
 - \frac{2 k (2 g - m + 1)}{2 g + 1} \lambda_{2k} \lambda_{2m},
\] 
and for $k > m $ we set $T_{2k, 2m} = T_{2m, 2k}$.

\begin{lem}\label{lem1}
For $k = 0, 1, 2, \ldots, {2 g - 1}$ the formula holds
\begin{equation} \label{Lk}
 L_{2k} = \sum_{s = 2}^{2 g + 1} T_{2k + 2, 2 s - 2} \frac{\partial}{\partial \lambda_{2s}}.
\end{equation} 
\end{lem}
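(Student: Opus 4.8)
The plan is to pass from the orbit coordinates $\lambda$ to symmetric functions of the roots $\xi_1,\dots,\xi_{2g+1}$ and to read \eqref{Lk} as an explicit evaluation of the convolution of invariants. Comparing coefficients in \eqref{pol1} gives $\lambda_{2m}=(-1)^m e_m$, where $e_m=e_m(\xi)$ is the $m$-th elementary symmetric function; in particular $e_0=\lambda_0=1$ and $e_1=-\lambda_2=0$ on $H$. Up to the normalization fixed in its definition, $L_{2k}$ is the $H$-tangential gradient field of the basic invariant $\lambda_{2k+4}$ for the metric $\sum_j d\xi_j^2$, i.e.\ the ambient gradient with its component along $(1,\dots,1)$ deleted; the weight check $2k=(2k+4)-4$ (two derivatives each lower the weight by $2$) fixes this correspondence. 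Hence $L_{2k}\lambda_{2s}$ is the associated structure polynomial, the tangential convolution
\[
\{\lambda_{2k+4},\lambda_{2s}\}_H=\sum_j\frac{\partial\lambda_{2k+4}}{\partial\xi_j}\frac{\partial\lambda_{2s}}{\partial\xi_j}-\frac{1}{2g+1}\Bigl(\sum_i\frac{\partial\lambda_{2k+4}}{\partial\xi_i}\Bigr)\Bigl(\sum_j\frac{\partial\lambda_{2s}}{\partial\xi_j}\Bigr),
\]
and the lemma amounts to evaluating this. The $\tfrac1{2g+1}$-weighted term is precisely the source of the correction term in $T$.

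First I would compute all these convolutions simultaneously by a generating function, denoting the same bilinear operation on elementary symmetric functions by $\{e_a,e_b\}_H$. From $\partial e_a/\partial\xi_j=e_{a-1}(\hat\xi_j)$ and $\sum_{m\ge0}e_m(\hat\xi_j)t^m=E(t)/(1+t\xi_j)$, where $E(t)=\prod_j(1+t\xi_j)=\sum_m e_m t^m$, summing over $j$ with the partial fractions $\frac{1}{(1+t\xi)(1+u\xi)}=\frac{1}{u-t}\bigl(\frac{u}{1+u\xi}-\frac{t}{1+t\xi}\bigr)$ and $\sum_j\frac{1}{1+t\xi_j}=(2g+1)-t\,E'(t)/E(t)$ yields a closed form for $\sum_{a,b}\langle\nabla e_a,\nabla e_b\rangle\,t^{a-1}u^{b-1}$. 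Subtracting the normal-component term, whose generating function factors as $\frac{1}{2g+1}\bigl((2g+1)E(t)-tE'(t)\bigr)\bigl((2g+1)E(u)-uE'(u)\bigr)$, almost everything cancels and leaves the clean identity
\[
\sum_{a,b\ge1}\{e_a,e_b\}_H\,t^{a-1}u^{b-1}=tu\left[\frac{E'(t)E(u)-E(t)E'(u)}{u-t}-\frac{E'(t)E'(u)}{2g+1}\right].
\]

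Next I would extract the coefficient of $t^{a-1}u^{b-1}$. The second term contributes $-\frac{1}{2g+1}(a-1)(b-1)e_{a-1}e_{b-1}$. For the Wronskian-type first term, writing $R_{c,d}=[t^{c}u^{d}]\bigl(E'(t)E(u)-E(t)E'(u)\bigr)/(u-t)$, I would prove
\[
R_{c,d}=(c+1)\,e_{c+1}e_{d+1}-\sum_{r=0}^{c}(c+d+2-2r)\,e_r\,e_{c+d+2-r}
\]
by verifying the recurrence $R_{c,d-1}-R_{c-1,d}=(c+1)e_{c+1}e_d-(d+1)e_c e_{d+1}$ (obtained from $(u-t)R=E'(t)E(u)-E(t)E'(u)$) together with $R_{c,-1}=0$, a routine induction. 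Setting $c=k$, $d=s-2$, combining the two contributions, substituting $e_m\to\lambda_{2m}$ (the two sign factors $(-1)^{k+s}$ cancel) and restricting to $H$, the monomial $\lambda_{2k+2}\lambda_{2s-2}$ separates off with coefficient $-\frac{2(k+1)(2g-s+2)}{2g+1}$, while the range $0\le r\le c=k$ of the remaining sum reindexes into exactly $\sum_{s'=0}^{k}2(k+s-2s')\lambda_{2s'}\lambda_{2(k+s-s')}$; together these are $T_{2k+2,2s-2}$.

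The main obstacle is the coefficient extraction from $\bigl(E'(t)E(u)-E(t)E'(u)\bigr)/(u-t)$ and the attendant sign and index bookkeeping in the final step: one must check that the separate $\lambda_{2k+2}\lambda_{2s-2}$ term of $T$ is produced by the prefactor $(c+1)e_{c+1}e_{d+1}$ of $R_{c,d}$ together with the projection term (which jointly yield the weight $\frac{2g-s+2}{2g+1}$), and that the displayed summation range of $T$ matches $0\le r\le c=k$. The symmetry $T_{2k,2m}=T_{2m,2k}$ built into the statement reflects the nonmanifest $c\leftrightarrow d$ symmetry of $R_{c,d}$, which the induction makes rigorous.
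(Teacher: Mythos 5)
Your overall strategy (identify $L_{2k}$ with a tangential gradient field of the basic invariant $\lambda_{2k+4}$ and evaluate Arnold's convolution of invariants by generating functions) is sound, and it is genuinely different from the paper, which gives no self-contained proof of Lemma \ref{lem1} but refers to \cite[\S 4]{B3} for the expressions and to \cite[Lemma 3.1]{4A} for a detailed proof. Your generating-function identity for $\sum_{a,b}\{e_a,e_b\}_H\,t^{a-1}u^{b-1}$ is correct, and so is the closed formula for $R_{c,d}$ together with its proof via the recurrence $R_{c,d-1}-R_{c-1,d}=(c+1)e_{c+1}e_d-(d+1)e_ce_{d+1}$ and the boundary condition $R_{c,-1}=0$; I verified both.

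However, the final assembly contains a genuine inconsistency: the pieces you derived combine to
\[
\{\lambda_{2k+4},\lambda_{2s}\}_H=\frac{(k+1)(2g-s+2)}{2g+1}\,\lambda_{2k+2}\lambda_{2s-2}-\sum_{r=0}^{k}(k+s-2r)\,\lambda_{2r}\lambda_{2(k+s-r)}=-\tfrac12\,T_{2k+2,2s-2},
\]
not $T_{2k+2,2s-2}$. Indeed, from your formula for $R_{k,s-2}$ the separated monomial enters with coefficient $+(k+1)$, which together with the projection term $-\frac{(k+1)(s-1)}{2g+1}$ gives $+\frac{(k+1)(2g-s+2)}{2g+1}$, whereas your last paragraph asserts $-\frac{2(k+1)(2g-s+2)}{2g+1}$; likewise the remaining sum enters with coefficient $-(k+s-2r)$, not $+2(k+s-2r)$. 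A direct sanity check at $g=1$ confirms the factor: $\{\lambda_4,\lambda_4\}_H=|\nabla e_2|^2\big|_H=\sum_i\xi_i^2=-2e_2=-2\lambda_4$, while $T_{2,2}=4\lambda_4$. So with the normalization you actually state ($L_{2k}$ the gradient field with respect to $\sum_j d\xi_j^2$), the formula \eqref{Lk} is false by the uniform factor $-2$, and your hedge ``up to the normalization fixed in its definition'' is never discharged: as you note yourself, the weight count cannot fix a constant factor. The missing step is to pin the normalization down, most naturally by the paper's requirement that $L_0$ be the Euler field $\sum_s 2s\lambda_{2s}\,\partial/\partial\lambda_{2s}$; since your computation gives $\mathrm{grad}^H\lambda_4=-\frac12\sum_s 2s\lambda_{2s}\,\partial/\partial\lambda_{2s}$, the fields of the lemma are $L_{2k}=-2\,\mathrm{grad}^H\lambda_{2k+4}$, i.e.\ the convolution must be taken with respect to the form $-2\sum_j d\xi_j^2$. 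Once this factor is made explicit, your computation does yield exactly \eqref{Lk}, including the symmetric extension needed when $k+1>s-1$.
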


The expressions for the matrix $T = (T_{2k, 2m})$ in \eqref{Lk} are taken from \cite[\S 4]{B3}. A detailed proof of the Lemma can be found in \cite[Lemma 3.1]{4A}.

The vector field $L_0$ is the Euler vector field. It determines the weights of the vector fields $L_k$. Namely, $\wt \lambda_{2k} = 2 k$, $\wt L_{2k} = 2 k$ and for all $k$ we have
\begin{equation} \label{L0}
 [L_0, L_{2k}] = 2 k L_{2k}.
\end{equation}

The vector fields $\{L_{2k}\}$, where $k = 0, 1, 2, \ldots, {2 g - 1}$, generate a graded polynomial~Lie algebra~\cite{BPol}. We denote it by~$\mathscr{L}_b$.
Denote by $c_{2i,2j}^{2s}(\lambda)$ the structure polynomials of $\mathscr{L}_b$, i.e.
\begin{equation} \label{cijs}
[L_{2i}, L_{2j}] = \sum_{s=0}^{2g-1}c_{2i,2j}^{2s}(\lambda) L_{2s}.
\end{equation}
We have the relations
\begin{align*}
c_{2j,2i}^{2s}(\lambda) &= - c_{2i,2j}^{2s}(\lambda), &
c_{0,2k}^{2s}(\lambda) &= 0 \quad \text{for} \quad  s \ne k, &
c_{0,2k}^{2k}(\lambda) &= 2k.
\end{align*}

\begin{lem} \label{lg4}
In the genus $g=4$ case we have the expressions
\begin{align*} 
\begin{pmatrix}
c_{2,2j}^{2s}(\lambda)
 \end{pmatrix}
&= {2 \over 9}
\begin{pmatrix}
12 \lambda_6 & - 12 \lambda_4 & 0 & 9 & 0 & 0 & 0 & 0 \\
10 \lambda_8 & 0 & - 10 \lambda_4 & 0 & 18 & 0 & 0 & 0 \\
8 \lambda_{10} & 0 & 0 & - 8 \lambda_4 & 0 & 27 & 0 & 0 \\
6 \lambda_{12} & 0 & 0 & 0 & - 6 \lambda_4 & 0 & 36 & 0 \\
4 \lambda_{14} & 0 & 0 & 0 & 0 & - 4 \lambda_4 & 0 & 45 \\
2 \lambda_{16} & 0 & 0 & 0 & 0 & 0 & - 2 \lambda_4 & 0 \\
\end{pmatrix} \\
&\text{for} \quad j=2,3,4,5,6,7;\\
\begin{pmatrix}
c_{4,2j}^{2s}(\lambda)
 \end{pmatrix}
&= {2 \over 9}
\begin{pmatrix}
- 9 \lambda_{10} & 15 \lambda_8 & - 15 \lambda_6 & 9 \lambda_4 & 0 & 9 & 0 & 0 \\
- 18 \lambda_{12} & 12 \lambda_{10} & 0 & - 12 \lambda_6 & 18 \lambda_4 & 0 & 18 & 0 \\
- 27 \lambda_{14} & 9 \lambda_{12} & 0 & 0 & - 9 \lambda_6 & 27 \lambda_4 & 0 & 27 \\
- 36 \lambda_{16} & 6 \lambda_{14} & 0 & 0 & 0 & - 6 \lambda_6 & 36 \lambda_4 & 0 \\
- 45 \lambda_{18} & 3 \lambda_{16} & 0 & 0 & 0 & 0 & - 3 \lambda_6 & 45 \lambda_4 \\
\end{pmatrix}\\
&\text{for} \quad j=3,4,5,6,7;\\
\begin{pmatrix}
c_{6,2j}^{2s}(\lambda)
 \end{pmatrix}
&= {2 \over 9}
\begin{pmatrix}
- 9 \lambda_{14} & - 9 \lambda_{12} & 16 \lambda_{10} & - 16 \lambda_8 &  9 \lambda_6 & 9 \lambda_4 & 0 & 9 \\
- 18 \lambda_{16} & - 18 \lambda_{14} & 12 \lambda_{12} & 0 & - 12 \lambda_8 & 18 \lambda_6 & 18 \lambda_4 & 0 \\
- 27 \lambda_{18} & - 27 \lambda_{16} & 8 \lambda_{14} & 0 & 0 & - 8 \lambda_8 & 27 \lambda_6 & 27 \lambda_4 \\
0 & - 36 \lambda_{18} & 4 \lambda_{16} & 0 & 0 & 0 & - 4 \lambda_8 & 36 \lambda_6 \\
\end{pmatrix}\\
&\text{for} \quad j=4,5,6,7;\\
\begin{pmatrix}
c_{8,2j}^{2s}(\lambda)
 \end{pmatrix}
&= {2 \over 9}
\begin{pmatrix}
- 9 \lambda_{18} & - 9 \lambda_{16} & - 9 \lambda_{14} & 15 \lambda_{12} & - 15 \lambda_{10} & 9 \lambda_8 & 9 \lambda_6 & 9 \lambda_4 \\
0 & - 18 \lambda_{18} & - 18 \lambda_{16} & 10 \lambda_{14} & 0 & - 10 \lambda_{10} & 18 \lambda_8 & 18 \lambda_6 \\
0 & 0 & - 27 \lambda_{18} & 5 \lambda_{16} & 0 & 0 & - 5 \lambda_{10} & 27 \lambda_8 \\
\end{pmatrix}\\
&\text{for} \quad j=5,6,7;\\
\begin{pmatrix}
c_{10,2j}^{2s}(\lambda)
 \end{pmatrix}
&= {2 \over 9}
\begin{pmatrix}
0 & 0 & - 9 \lambda_{18} & - 9 \lambda_{16} & 12 \lambda_{14} & -12 \lambda_{12} & 9 \lambda_{10} & 9 \lambda_8 \\
0 & 0 & 0 & - 18 \lambda_{18} & 6 \lambda_{16} & 0 & - 6 \lambda_{12} & 18 \lambda_{10} \\
\end{pmatrix}\\
&\text{for} \quad j=6,7;\\
\begin{pmatrix}
c_{12,14}^{2s}(\lambda)
 \end{pmatrix}
&= {2 \over 9}
\begin{pmatrix}
0 & 0 & 0 & 0 & - 9 \lambda_{18} & 7 \lambda_{16} & - 7 \lambda_{14} & 9 \lambda_{12} \\
\end{pmatrix}.
\end{align*}
\end{lem}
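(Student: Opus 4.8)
The plan is to verify each displayed matrix by computing the corresponding commutator $[L_{2i},L_{2j}]$ from the explicit form \eqref{Lk} of the fields and re-expanding it in the basis $\{L_{2s}\}$. Writing $L_{2k}=\sum_{m=2}^{2g+1}T_{2k+2,2m-2}\,\partial/\partial\lambda_{2m}$, the commutator of two such first-order operators is again a polynomial vector field,
\begin{equation*}
[L_{2i},L_{2j}]=\sum_{m=2}^{2g+1}\Bigl(L_{2i}\bigl(T_{2j+2,2m-2}\bigr)-L_{2j}\bigl(T_{2i+2,2m-2}\bigr)\Bigr)\frac{\partial}{\partial\lambda_{2m}}.
\end{equation*}
Because the $2g$ fields $L_0,L_2,\ldots,L_{4g-2}$ are linearly independent at every point of $\mathcal{B}$, equivalently the $8\times 8$ matrix $(T_{2s+2,2m-2})_{s,m}$ is invertible over the field of fractions, the coefficients $c_{2i,2j}^{2s}(\lambda)$ in \eqref{cijs} are uniquely determined. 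It therefore suffices to check, coefficient by coefficient in $\partial/\partial\lambda_{2m}$, the scalar identity
\begin{equation*}
\sum_{s=0}^{2g-1}c_{2i,2j}^{2s}(\lambda)\,T_{2s+2,2m-2}=L_{2i}\bigl(T_{2j+2,2m-2}\bigr)-L_{2j}\bigl(T_{2i+2,2m-2}\bigr).
\end{equation*}

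Next I would use the grading to reduce this to fixing finitely many constants. By \cite{BPol} the algebra $\mathscr{L}_b$ is polynomial, so each $c_{2i,2j}^{2s}$ is a polynomial; relation \eqref{L0} together with $\wt\lambda_{2k}=2k$ then forces it to be weight-homogeneous of weight $2(i+j-s)$, whence it vanishes once $s>i+j$. A degree count refines this: the right-hand sides above are cubic in $\lambda$ while the $T$'s are quadratic, so if some $c_{2i,2j}^{2s}$ had degree $D\ge 2$ its top homogeneous part would have to lie in the (trivial) kernel of $(T_{2s+2,2m-2})$, a contradiction; hence $\deg_\lambda c_{2i,2j}^{2s}\le 1$. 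Combined with $\lambda_0=1$ and $\lambda_2=0$, weight-homogeneity of degree at most one leaves only three possibilities for each entry: a rational constant when $s=i+j$, a constant multiple of the single variable $\lambda_{2(i+j-s)}$ when $2(i+j-s)\in\{4,6,\ldots,4g\}$, and $0$ otherwise. This is precisely the pattern of the tables, in which the row index $j$ and column index $s$ track the weight $2(i+j-s)$.

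It then remains to pin the numerical coefficients by matching monomials in the second displayed identity, for each admissible pair $1\le i<j\le 2g-1$; for $g=4$ these are the $21$ pairs tabulated. The complementary cases require no further work: $i=j$ gives $[L_{2i},L_{2i}]=0$, the range $j<i$ follows from the antisymmetry $c_{2j,2i}^{2s}=-c_{2i,2j}^{2s}$, and the entire case $i=0$ together with the column $s=0$ is already fixed by the Euler relation \eqref{L0} and by the relations $c_{0,2k}^{2s}=0$ for $s\ne k$, $c_{0,2k}^{2k}=2k$ recorded after \eqref{cijs}.

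The only genuine obstacle is the volume of the algebra: each of the $21$ commutators is an identity between cubic polynomials in $\lambda_4,\ldots,\lambda_{18}$, and assembling all eight $\partial/\partial\lambda_{2m}$-components per pair is lengthy. I would organize the computation by weight and by antisymmetry, and use the Jacobi identity for $\mathscr{L}_b$ as well as the structural analogy with the genus $3$ table of \cite{B3} as independent consistency checks. A more conceptual alternative, which avoids the ad hoc bookkeeping, is to read the entries off the general structure polynomials of the convolution of invariants computed by D.\,B.\,Fuchs \cite[Section 4]{A}, specialized to $g=4$ and to $\lambda_s=0$ for $s\notin\{0,4,6,\ldots,18\}$.
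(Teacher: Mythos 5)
Your core plan coincides with the paper's proof: the paper establishes the lemma by ``a straightforward check using the explicit expressions \eqref{Lk}'', i.e.\ precisely by computing $[L_{2i},L_{2j}]$ componentwise and comparing with $\sum_s c_{2i,2j}^{2s}(\lambda)\,T_{2s+2,2m-2}$. Your reduction to the scalar identities, the uniqueness of the $c_{2i,2j}^{2s}$ via linear independence of $L_0,\dots,L_{14}$ over $\mathcal{B}$, and the weight count ($c_{2i,2j}^{2s}$ is weight-homogeneous of weight $2(i+j-s)$, hence zero for $s>i+j$, and also zero for $i+j-s=1$ since there is no $\lambda_2$) are all sound and match the paper's intent.

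The flaw is in your degree bound. If $D=\max_s\deg_\lambda c_{2i,2j}^{2s}\geqslant 2$, then equating degree-$(D+2)$ homogeneous parts of your scalar identity gives, for every $m$,
\begin{equation*}
\sum_{s=0}^{7}\bigl[c_{2i,2j}^{2s}\bigr]_D\,\bigl[T_{2s+2,2m-2}\bigr]_2=0,
\end{equation*}
where $[\cdot]_d$ denotes the degree-$d$ part. So the obstruction lies in the left kernel of the matrix of \emph{quadratic parts} of the $T$'s, not of $(T_{2s+2,2m-2})$ itself, and that kernel is \emph{not} trivial: the row $s=0$ consists of the coefficients of the Euler field $L_0=\sum_m 2m\,\lambda_{2m}\,\partial/\partial\lambda_{2m}$, which are linear in $\lambda$, so $\bigl[T_{2,2m-2}\bigr]_2=0$ identically and the vector $(1,0,\dots,0)$ annihilates the matrix on the left. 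Hence your argument bounds $\deg_\lambda c_{2i,2j}^{2s}$ only for $s\geqslant 1$ and leaves open, for instance, that $c_{6,8}^{0}$ (weight $14$) might contain $\lambda_4\lambda_{10}$ or $\lambda_6\lambda_8$ in addition to $\lambda_{14}$. (A smaller slip: the admissible linear entries should run over $2(i+j-s)\in\{4,6,\dots,4g+2\}$; the tables do contain $\lambda_{18}$.) This does not sink the proof, because no a priori shape of the $c$'s is needed: by your own uniqueness observation it suffices to substitute the tabulated polynomials into the scalar identities and verify them monomial by monomial, which is your final step and is exactly the paper's check. But as written the inference ``hence $\deg_\lambda c_{2i,2j}^{2s}\le 1$'' is unjustified for the coefficient of $L_0$ and should be dropped, or repaired by treating the $s=0$ column separately.
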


\textit{The proof} is a straightforward check using the explicit expressions \eqref{Lk}.

\begin{cor}\label{c13}
In the genus $g=4$ case for $k=3,4,5,6,7$ we have the expression
 \[
L_{2k} = {1 \over 2 (k-2)} [L_2, L_{2k-2}] + {2 (k - 9) \over 9 (k-2) } \left( \lambda_{2k} L_0 - \lambda_4 L_{2k-4}\right).
\]
\end{cor}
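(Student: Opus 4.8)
The plan is to obtain the relation by reading the single commutator $[L_2, L_{2k-2}]$ directly off the first matrix of Lemma~\ref{lg4} and then solving it for $L_{2k}$. First I would fix $k \in \{3,4,5,6,7\}$ and write $2k-2 = 2j$ with $j = k-1 \in \{2,3,4,5,6\}$, so that by \eqref{cijs} the expansion of $[L_2, L_{2k-2}] = [L_2, L_{2j}]$ is exactly the row with index $j = k-1$ of the displayed matrix $\bigl(c_{2,2j}^{2s}(\lambda)\bigr)$. Inspecting that row, I would note that it has only three nonzero entries, in the $L_0$, $L_{2k-4}$, and $L_{2k}$ positions, and that across the five relevant rows $j = 2, \ldots, 6$ they follow the uniform pattern
\begin{align*}
c_{2,2k-2}^{0}(\lambda) &= \tfrac{4(9-k)}{9}\,\lambda_{2k}, &
c_{2,2k-2}^{2k-4}(\lambda) &= -\tfrac{4(9-k)}{9}\,\lambda_4, &
c_{2,2k-2}^{2k}(\lambda) &= 2(k-2),
\end{align*}
all remaining structure constants in this row being zero.

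This yields the identity
\[
[L_2, L_{2k-2}] = \tfrac{4(9-k)}{9}\,\lambda_{2k}\, L_0 - \tfrac{4(9-k)}{9}\,\lambda_4\, L_{2k-4} + 2(k-2)\, L_{2k}.
\]
Since $k \geq 3$, the coefficient $2(k-2)$ is nonzero, so I would solve for $L_{2k}$ and rearrange:
\[
L_{2k} = \frac{1}{2(k-2)}[L_2, L_{2k-2}] - \frac{2(9-k)}{9(k-2)}\bigl(\lambda_{2k}\, L_0 - \lambda_4\, L_{2k-4}\bigr).
\]
Writing $-(9-k) = k-9$ turns the second coefficient into $\tfrac{2(k-9)}{9(k-2)}$, which is exactly the claimed formula.

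The only genuine work lies in the middle step: verifying that the three nonzero entries of the row with $j = k-1$ sit in the $L_0$, $L_{2k-4}$, and $L_{2k}$ columns and obey the stated coefficient pattern uniformly in $k$. I expect this index bookkeeping — tracking the positions of the $L_{2k-4}$ and $L_{2k}$ columns, which shift as $k$ varies — to be the main (albeit entirely routine) point; the rest is a one-line linear rearrangement. I would also remark that the last row $j = 7$ of the matrix is excluded precisely because it corresponds to $k = 8$, for which $L_{2k} = L_{16}$ is no longer among the generating fields $L_0, L_2, \ldots, L_{14}$, and indeed that row carries no $L_{2k}$ term to solve for.
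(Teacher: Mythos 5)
Your proposal is correct and takes essentially the same route as the paper: its proof consists precisely of applying \eqref{cijs} with $i=1$ and the explicit matrix $\bigl(c_{2,2j}^{2s}(\lambda)\bigr)$ of Lemma~\ref{lg4}, then solving the row $j=k-1$ for $L_{2k}$. The coefficient pattern you extract and the final rearrangement (including the sign flip $-(9-k)=k-9$) check out against all five rows, so you have simply made explicit the verification the paper leaves to the reader.
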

\textit{The proof} is obtained from \eqref{cijs} for $i=1$ using the explicit expression for $c_{2,2j}^{2s}(\lambda)$ from~Lemma \ref{lg4}.

\vfill

\eject

\section{Lie algebra of derivations} \label{c2}
The following explicit form for the operators follows from the general theory developed in~\cite{BB20}. It~is based on the results of~\cite{BB19}. We will give independent proofs.

In the genus $g= 4$ case we set:
\begin{align*}
\mathcal{L}_1 &= \partial_1, \qquad \mathcal{L}_3 = \partial_3,
\qquad \mathcal{L}_5 = \partial_5,
\qquad \mathcal{L}_7 = \partial_7,\\
\mathcal{L}_0 &= L_0
- z_1 \partial_1 - 3 z_3 \partial_3 - 5 z_5 \partial_5 - 7 z_7 \partial_7;
\\
\mathcal{L}_2 &= L_2 
- \zeta_1 \partial_1 + {4 \over 3} \lambda_4 z_{3} \partial_1 - \left(z_1 - {8 \over 9} \lambda_4 z_{5}\right) \partial_3 - \left(3 z_3 - {4 \over 9} \lambda_4 z_{7}\right) \partial_5 - 5 z_5 \partial_7;
\\
\mathcal{L}_4 &= L_4
- \zeta_3 \partial_1 - \zeta_1 \partial_3 - \\ & \quad
+ 2 \lambda_6 z_3 \partial_1
- \left( \lambda_4 z_3
- {4 \over 3} \lambda_6 z_5 \right) \partial_3 
- \left( z_1 + 3 \lambda_4 z_5 
- {2 \over 3} \lambda_6 z_7\right) \partial_5
- \left( 3 z_3 + 5 \lambda_4 z_7 \right) \partial_7;
\\
\mathcal{L}_6 &= L_6
- \zeta_5 \partial_1 - \zeta_3 \partial_3 - \zeta_1 \partial_5 + \\ & \quad
+ {5 \over 3} \lambda_8 z_3 \partial_1 + {16 \over 9} \lambda_8 z_5 \partial_3 - \left( \lambda_4 z_3 + 2 \lambda_6 z_5 -{8 \over 9} \lambda_8 z_7 \right) \partial_5 - \left(z_1 +3 \lambda_4 z_5 + 4 \lambda_6 z_7\right) \partial_7;
\\
\mathcal{L}_8 &= L_8
- \zeta_7 \partial_1 - \zeta_5 \partial_3 - \zeta_3 \partial_5 - \zeta_1 \partial_7 + \left({4 \over 3} \lambda_{10} z_3 - \lambda_{12} z_5\right) \partial_1 + \\ & \quad + {20 \over 9} \lambda_{10} z_5 \partial_3 - \left(\lambda_8 z_5 - {10 \over 9} \lambda_{10} z_7\right) \partial_5 - \left(\lambda_4 z_3 + 2 \lambda_6 z_5 + 3 \lambda_8 z_7\right) \partial_7;
\\
\mathcal{L}_{10} &= L_{10} 
- \zeta_7 \partial_3 - \zeta_5 \partial_5 - \zeta_3 \partial_7  + \\ & \quad
+ (\lambda_{12} z_3 - 2 \lambda_{14} z_5 - \lambda_{16} z_7) \partial_1 + {5 \over 3} \lambda_{12} z_5 \partial_3 + {4 \over 3} \lambda_{12} z_7 \partial_5 - (\lambda_8 z_5 + 2 \lambda_{10} z_7) \partial_7;
\\
\mathcal{L}_{12} &= L_{12}
- \zeta_7 \partial_5 - \zeta_5 \partial_7 +\\ & \quad  + \left({2 \over 3} \lambda_{14} z_3 - 3 \lambda_{16} z_5 - 2 \lambda_{18} z_7\right) \partial_1 + \left({10 \over 9} \lambda_{14} z_5 - \lambda_{16} z_7 \right) \partial_3 + {14 \over 9} \lambda_{14} z_7 \partial_5 - \lambda_{12} z_7 \partial_7;
\\
\mathcal{L}_{14} &= L_{14}
- \zeta_7 \partial_7 + \left({1 \over 3} \lambda_{16} z_3 - 4 \lambda_{18} z_5\right) \partial_1 + \left({5 \over 9} \lambda_{16} z_5 - 2 \lambda_{18} z_7\right) \partial_3 + {7 \over 9} \lambda_{16} z_7 \partial_5.
\end{align*}

We denote the Lie algebra generated by the vector fields $\mathcal{L}_{0}$, $\mathcal{L}_{1}$, $\mathcal{L}_{2}$, $\mathcal{L}_{3}$, $\mathcal{L}_{4}$, $\mathcal{L}_{5}$, $\mathcal{L}_{6}$, $\mathcal{L}_{7}$, $\mathcal{L}_{8}$, $\mathcal{L}_{10}$, $\mathcal{L}_{12}$, and $\mathcal{L}_{14}$ by $\mathscr{L}$.

\begin{thm} \label{t21}
The Lie algebra $\mathscr{L}$ is the Lie algebra of derivations of the field $\mathcal{F}$, i.e.~$\mathcal{L}_{k} \varphi \in \mathcal{F}$ for $\varphi \in \mathcal{F}$.
\end{thm}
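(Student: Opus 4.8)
The plan is to prove that every generator $\mathcal{L}_k$ of $\mathscr{L}$ maps the ring $\mathcal{P}$ into itself; since a derivation of a ring extends uniquely to its field of fractions (explicitly, $\mathcal{L}_k(p/q)=(\mathcal{L}_k p\cdot q-p\,\mathcal{L}_k q)/q^2$), this yields $\mathcal{L}_k\mathcal{F}\subset\mathcal{F}$. Each $\mathcal{L}_k$ is a first-order operator, hence a derivation, and $\mathcal{P}$ is generated as a ring by the functions $\wp_{k_1,\ldots,k_n}$ with $n\geqslant 2$, so it suffices to check $\mathcal{L}_k\wp_{k_1,\ldots,k_n}\in\mathcal{P}$ on these generators. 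The operators $\mathcal{L}_1,\mathcal{L}_3,\mathcal{L}_5,\mathcal{L}_7$ are the $\partial_j$, and $\partial_j\wp_{k_1,\ldots,k_n}=\wp_{j,k_1,\ldots,k_n}\in\mathcal{P}$, so these cases are immediate. For $\mathcal{L}_0$ I would use that it is the Euler operator for the grading $\wt z_{2j-1}=-(2j-1)$, $\wt\lambda_{2s}=2s$: homogeneity of $\sigma$ makes $\mathcal{L}_0\ln\sigma$ a constant, and commuting $\mathcal{L}_0$ through $\partial_{k_1}\cdots\partial_{k_n}$ by means of $[\mathcal{L}_0,\partial_{2j-1}]=(2j-1)\partial_{2j-1}$ gives $\mathcal{L}_0\wp_{k_1,\ldots,k_n}=(k_1+\cdots+k_n)\wp_{k_1,\ldots,k_n}\in\mathcal{P}$.

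The content of the theorem lies in the operators $\mathcal{L}_{2k}$ with $k\geqslant 1$, which I write as $\mathcal{L}_{2k}=L_{2k}+\sum_l V_l\,\partial_l$, where each coefficient $V_l$ is of the form $-\zeta_m$ plus a term linear in $z$. Two structural facts organise the argument. First, since $L_{2k}$ differentiates only the variables $\lambda$ and the polynomial part of each $V_l$ is linear in $z$, one has $[\mathcal{L}_{2k},\partial_j]=-\sum_l(\partial_j V_l)\partial_l$ with $\partial_j V_l=\wp_{j,m}+\mathrm{const}\in\mathcal{P}$; thus the bracket of $\mathcal{L}_{2k}$ with any $\partial_j$ is again a field in the $z$-directions with coefficients in $\mathcal{P}$. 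Second, writing a general generator as $\wp_{k_1,\ldots,k_n}=\partial_{k_3}\cdots\partial_{k_n}\wp_{k_1,k_2}$ and commuting $\mathcal{L}_{2k}$ to the right, one term is $\partial_{k_3}\cdots\partial_{k_n}(\mathcal{L}_{2k}\wp_{k_1,k_2})$ and every other term is of the shape (element of $\mathcal{P}$)$\,\cdot\,\partial_l(\cdots)\wp_{k_1,k_2}$. Because the two-index block $\wp_{k_1,k_2}$ only acquires further indices under these differentiations, no factor $\zeta$ can reappear, and the whole expression lies in $\mathcal{P}$ once the base case $\mathcal{L}_{2k}\wp_{ij}\in\mathcal{P}$ is known. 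This reduces the theorem to finitely many checks on two-index generators.

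For the base case I would compute $\mathcal{L}_{2k}\wp_{ij}=L_{2k}\wp_{ij}+\sum_l V_l\,\wp_{lij}$. The second sum contributes terms $-\zeta_m\wp_{lij}$ and (linear in $z$)$\cdot\wp_{lij}$, neither of which belongs to $\mathcal{P}$. Here the results of \cite{BB19} on the genus-$4$ sigma function enter: they provide the heat-type equation for $L_{2k}\sigma$, equivalently the expansion of $L_{2k}\wp_{ij}=-\partial_i\partial_j\big(L_{2k}\sigma/\sigma\big)$ as a combination of $\wp$-functions together with products $\zeta_m\wp_{lij}$ and terms linear in $z$ times $\wp$-functions. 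The coefficients chosen in the definition of $\mathcal{L}_{2k}$, namely the $\zeta$-parts and the linear-in-$z$ parts of the $V_l$, are calibrated so that all these non-$\mathcal{P}$ contributions cancel, leaving $\mathcal{L}_{2k}\wp_{ij}$ as a polynomial in $\wp$-functions. I expect this cancellation to be the main obstacle: it is precisely the step where the quasi-periodic functions $\zeta_m$ and the coordinate polynomials in $z$, none of which lies in $\mathcal{F}$, are forced to disappear, and it depends on the exact coefficients supplied by the sigma-function relations.

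To avoid repeating the base case for every $k$, I would exploit Corollary~\ref{c13}. Using that $\lambda_{2s}\in\mathcal{F}$, that an $\mathcal{F}$-multiple of a derivation of $\mathcal{F}$ is again a derivation of $\mathcal{F}$, and that the bracket of two derivations of $\mathcal{F}$ is a derivation of $\mathcal{F}$, the analogue of Corollary~\ref{c13} for the operators $\mathcal{L}_{2k}$ would express $\mathcal{L}_{2k}$, for $k=3,\ldots,7$, through $[\mathcal{L}_2,\mathcal{L}_{2k-2}]$, $\lambda_{2k}\mathcal{L}_0$ and $\lambda_4\mathcal{L}_{2k-4}$. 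Granting this lifted relation, which is itself a finite verification involving the $z$-directional parts, an induction on $k$ starting from $\mathcal{L}_0$, $\mathcal{L}_2$, $\mathcal{L}_4$ propagates the derivation property to $\mathcal{L}_6,\mathcal{L}_8,\mathcal{L}_{10},\mathcal{L}_{12},\mathcal{L}_{14}$, so that only $\mathcal{L}_2$ and $\mathcal{L}_4$ actually require the direct base-case computation described above.
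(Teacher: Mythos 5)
Your proposal is sound in outline and takes a genuinely different route from the paper. You prove the derivation property head-on: reduction to the ring $\mathcal{P}$, direct treatment of $\mathcal{L}_0,\mathcal{L}_1,\mathcal{L}_3,\mathcal{L}_5,\mathcal{L}_7$, reduction to the two-index generators $\wp_{i,j}$ via the correct observation that $[\mathcal{L}_{2k},\partial_j]$ is a $z$-directional field with coefficients in $\mathcal{P}$, a base case for $\mathcal{L}_2,\mathcal{L}_4$ taken from the heat equations of \cite{BB19}, and a bootstrap to $\mathcal{L}_6,\ldots,\mathcal{L}_{14}$ through a lifted Corollary~\ref{c13}. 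The paper argues in the opposite direction (Sections~\ref{c3} and~\ref{c4}): it first builds the polynomial Lie algebra $\mathscr{D}$ of vector fields on $\mathbb{C}^{12}$ projectable over $p$ (your bracket bootstrap appears there, in the definitions of $\mathcal{D}_6,\ldots,\mathcal{D}_{14}$, where the brackets are pure polynomial algebra); by the theory of \cite[Chapter~6]{B3}, resting on the Dubrovin--Novikov birational isomorphism between $\mathcal{U}$ and $\mathbb{C}^{12}$, such fields pull back under $\varphi$ to derivations of $\mathcal{F}$, so the derivation property is automatic from the polynomiality of the $\mathcal{D}_s$. What remains is to identify the explicit operators of Section~\ref{c2} with these pullbacks, and that is a uniqueness argument: both have the form $L_{2k}+\sum_j f_{2k,j}\partial_j$, both satisfy the same commutation relations with $\partial_1,\partial_3,\partial_5,\partial_7$ (Lemma~\ref{l23} compared with Lemma~\ref{d4}), these relations determine the $z$-gradients of the $f_{2k,j}$, and the condition $[\mathcal{L}_0,\mathcal{L}_k]=k\mathcal{L}_k$ excludes the residual $z$-constants, since such a constant would be a $\lambda$-polynomial of odd weight $2k-j$ while every $\lambda$-polynomial has even weight. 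Note that Lemma~\ref{l23} needs only $\partial_j\zeta_m=-\wp_{j,m}$: no expression for $\mathcal{L}_{2k}\zeta_s$ enters the paper's proof of the Theorem (those expressions, taken from \cite{BB20}, appear only in Lemma~\ref{l24}, which the proof of Theorem~\ref{t21} does not use).

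The weak point of your plan is exactly where you place it. The base case $\mathcal{L}_2\wp_{i,j},\,\mathcal{L}_4\wp_{i,j}\in\mathcal{P}$ is asserted as a ``calibration'' against \cite{BB19} rather than verified, and the lifted Corollary~\ref{c13} driving your induction is itself an operator identity whose verification requires the same transcendental input ($\mathcal{L}_2\zeta_s$, $\mathcal{L}_4\zeta_s$, and, inductively, the action of the lower even operators on $\mathcal{P}$). So in your route the heat-equation identities are the load-bearing step, whereas the paper's architecture is designed precisely to avoid them: existence of the derivations comes from pulling back polynomial fields, and the explicit formulas are pinned down by elementary commutator and grading conditions. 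Both routes are legitimate, and yours is conceptually more self-contained; but to become a complete proof it must state the heat operators for $L_2\sigma$ and $L_4\sigma$ from \cite{BB19} and carry out explicitly the cancellation of the $\zeta_m\wp_{l,i,j}$ and $z$-linear terms, since in your approach that single computation carries the entire content of the theorem.
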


\textit{The proof} of this Theorem will be given in Section~\ref{c4}.

\begin{lem}
For the commutators in the Lie algebra $\mathscr{L}$ we have the relations:
\begin{align*}
[\mathcal{L}_0, \mathcal{L}_{k}] &= k \mathcal{L}_{k}, & &k = 0,1,2,3,4,5,6,7,8,10,12,14;\\
[\mathcal{L}_{k}, \mathcal{L}_{m}] &= 0, & &k,m = 1,3,5,7.\\
\end{align*}
\end{lem}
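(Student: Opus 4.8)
I would split the statement into its two families and treat them separately. The commuting relations $[\mathcal{L}_k,\mathcal{L}_m]=0$ for $k,m\in\{1,3,5,7\}$ are immediate: since $\mathcal{L}_1=\partial_1$, $\mathcal{L}_3=\partial_3$, $\mathcal{L}_5=\partial_5$, $\mathcal{L}_7=\partial_7$ are vector fields with constant coefficients, their brackets vanish because mixed partial derivatives commute, $[\partial_k,\partial_m]=0$. Nothing further is needed here.

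For the weight relations $[\mathcal{L}_0,\mathcal{L}_k]=k\mathcal{L}_k$ the plan is to recognise $\mathcal{L}_0$ as the Euler vector field of a grading and then to show that each $\mathcal{L}_k$ is homogeneous of weight $k$. First I would fix the grading on $\mathbb{C}^{3g}$ by $\wt z_{2j-1}=-(2j-1)$ and $\wt\lambda_{2m}=2m$, so that $\wt\partial_{2j-1}=2j-1$. Comparing with the explicit form $L_0=\sum_m 2m\,\lambda_{2m}\frac{\partial}{\partial\lambda_{2m}}$ obtained from \eqref{Lk}, one sees that
\[
 \mathcal{L}_0=\sum_m 2m\,\lambda_{2m}\frac{\partial}{\partial\lambda_{2m}}-\sum_{j}(2j-1)z_{2j-1}\partial_{2j-1}
\]
is exactly the Euler operator $\mathcal{E}$ of this grading. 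For any Euler operator one has $[\mathcal{E},\partial_x]=-\wt(x)\,\partial_x$, whence for a vector field $V=\sum_j f_j\partial_{x_j}$ with homogeneous coefficients $[\mathcal{E},V]=\sum_j(\wt f_j+\wt\partial_{x_j})f_j\partial_{x_j}$. Thus the relation $[\mathcal{L}_0,\mathcal{L}_k]=k\mathcal{L}_k$ is equivalent to the single assertion that every summand $f\partial$ of $\mathcal{L}_k$ has total weight $k$, i.e. that $\mathcal{L}_k$ is homogeneous of weight $k$.

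It then remains to verify this homogeneity. The $\lambda$-part is handled by \eqref{L0}, which gives $\wt L_{2k}=2k$. For the correction terms I would use that the monomial $\lambda_{2m}z_{2j-1}$ has weight $2m-(2j-1)$ by construction, together with the weights of the sigma-logarithmic derivatives, $\wt\zeta_{2j-1}=2j-1$ (and $\wt\wp_{k_1,\ldots,k_n}=k_1+\cdots+k_n$). With these one checks term by term that in each $\mathcal{L}_k$ every summand $f\partial_{2j-1}$ satisfies $\wt f+(2j-1)=k$; for example in $\mathcal{L}_2$ the seven terms $\zeta_1\partial_1$, $\lambda_4 z_3\partial_1$, $z_1\partial_3$, $\lambda_4 z_5\partial_3$, $z_3\partial_5$, $\lambda_4 z_7\partial_5$, $z_5\partial_7$ all carry weight $2$, and the analogous verification for $\mathcal{L}_4,\ldots,\mathcal{L}_{14}$ is routine (the case $k=0$ being the trivial $[\mathcal{L}_0,\mathcal{L}_0]=0$).

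The only non-formal ingredient, and the step I expect to be the main obstacle, is the homogeneity of the transcendental coefficients, $\mathcal{L}_0\zeta_{2j-1}=(2j-1)\zeta_{2j-1}$. This is not visible from the polynomial grading alone and must be extracted from the homogeneity of the sigma function under the combined scaling $(z,\lambda)\mapsto(t^{-1}\!\circ z,\,t\circ\lambda)$: since $\mathcal{L}_0\ln\sigma$ is constant and $[\mathcal{L}_0,\partial_{2j-1}]=(2j-1)\partial_{2j-1}$, one gets $\mathcal{L}_0\zeta_{2j-1}=[\mathcal{L}_0,\partial_{2j-1}]\ln\sigma+\partial_{2j-1}(\mathcal{L}_0\ln\sigma)=(2j-1)\zeta_{2j-1}$. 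Concretely, the $\lambda$-action of $L_0$ on $\zeta_{2j-1}$ and the $z$-action of $-\sum_i(2i-1)z_{2i-1}\partial_{2i-1}$ (which produces $\wp$-terms) must combine, via the homogeneity and heat equations of $\sigma$ recorded in the cited works, into this single eigenvalue relation. Granting this, all weights are as claimed and $[\mathcal{L}_0,\mathcal{L}_k]=k\mathcal{L}_k$ follows.
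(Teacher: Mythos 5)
Your proposal is correct and takes essentially the same approach as the paper: the paper's proof likewise combines the explicit expressions for the $\mathcal{L}_k$ with the Euler-field property $\mathcal{L}_0 \zeta_k = k \zeta_k$. Your extra step deriving that eigenvalue relation from the quasi-homogeneity of $\sigma(z,\lambda)$ merely spells out what the paper invokes as a known fact.
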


\begin{proof}
We use the explicit expressions for $\mathcal{L}_k$ and the fact that $\mathcal{L}_0$ is the Euler vector field, thus $\mathcal{L}_0 \zeta_k = k \zeta_k$.
\end{proof}

\vfill
 
\begin{lem} \label{l23}
For the commutators in the Lie algebra $\mathscr{L}$ we have the relations:
\begin{align*}
\begin{pmatrix}
[\mathcal{L}_1, \mathcal{L}_2]\\
[\mathcal{L}_1, \mathcal{L}_4]\\
[\mathcal{L}_1, \mathcal{L}_6]\\
[\mathcal{L}_1, \mathcal{L}_8]\\
[\mathcal{L}_1, \mathcal{L}_{10}]\\ 
[\mathcal{L}_1, \mathcal{L}_{12}]\\
[\mathcal{L}_1, \mathcal{L}_{14}] 
\end{pmatrix}
&=
\begin{pmatrix}
\wp_{1,1} & - 1 & 0 & 0\\
\wp_{1,3} &\wp_{1,1} & - 1 & 0\\
\wp_{1,5} & \wp_{1,3} &\wp_{1,1} & - 1\\
\wp_{1,7} & \wp_{1,5} & \wp_{1,3} &\wp_{1,1}\\
0 & \wp_{1,7} & \wp_{1,5} & \wp_{1,3}\\
0 & 0 & \wp_{1,7} & \wp_{1,5}\\
0 & 0 & 0 & \wp_{1,7}\\
\end{pmatrix}
\begin{pmatrix}
\mathcal{L}_1\\
\mathcal{L}_3\\
\mathcal{L}_5\\
\mathcal{L}_7
\end{pmatrix};
\\
\begin{pmatrix}
[\mathcal{L}_3, \mathcal{L}_2]\\
[\mathcal{L}_3, \mathcal{L}_4]\\
[\mathcal{L}_3, \mathcal{L}_6]\\
[\mathcal{L}_3, \mathcal{L}_8]\\
[\mathcal{L}_3, \mathcal{L}_{10}]\\ 
[\mathcal{L}_3, \mathcal{L}_{12}]\\
[\mathcal{L}_3, \mathcal{L}_{14}] 
\end{pmatrix}
&=
\begin{pmatrix}
\wp_{1,3} - \lambda_4 & 0 & - 3 & 0\\
\wp_{3,3} & \wp_{1,3} - \lambda_4 & 0 & - 3\\
\wp_{3,5} & \wp_{3,3} & \wp_{1,3} - \lambda_4 & 0\\
\wp_{3,7} & \wp_{3,5} & \wp_{3,3} & \wp_{1,3} - \lambda_4\\
0 & \wp_{3,7} & \wp_{3,5} &  \wp_{3,3}\\
0 & 0 & \wp_{3,7} & \wp_{3,5}\\
0 & 0 & 0 & \wp_{3,7}
\end{pmatrix}
\begin{pmatrix}
\mathcal{L}_1\\
\mathcal{L}_3\\
\mathcal{L}_5\\
\mathcal{L}_7
\end{pmatrix}
+ {1 \over 3} 
\begin{pmatrix}
 7 \lambda_4\\
 6 \lambda_6\\
 5 \lambda_8\\
 4 \lambda_{10}\\
 3 \lambda_{12}\\
 2 \lambda_{14}\\
 \lambda_{16}\\
\end{pmatrix}
\mathcal{L}_1;
\\
\begin{pmatrix}
[\mathcal{L}_5, \mathcal{L}_2]\\
[\mathcal{L}_5, \mathcal{L}_4]\\
[\mathcal{L}_5, \mathcal{L}_6]\\
[\mathcal{L}_5, \mathcal{L}_8]\\
[\mathcal{L}_5, \mathcal{L}_{10}]\\ 
[\mathcal{L}_5, \mathcal{L}_{12}]\\
[\mathcal{L}_5, \mathcal{L}_{14}] 
\end{pmatrix}
&=
\begin{pmatrix}
\wp_{1,5} & 0 & 0 & - 5\\
\wp_{3,5} &\wp_{1,5} & - 3 \lambda_4 & 0\\
\wp_{5,5} & \wp_{3,5} & \wp_{1,5} - 2 \lambda_6 & - 3 \lambda_4\\
\wp_{5,7} - \lambda_{12} &  \wp_{5,5} & \wp_{3,5} - \lambda_{8} &\wp_{1,5} - 2 \lambda_6\\
- 2 \lambda_{14} & \wp_{5,7} - \lambda_{12} & \wp_{5,5} & \wp_{3,5} - \lambda_{8}\\
- 3 \lambda_{16} & - 2 \lambda_{14} & \wp_{5,7} &  \wp_{5,5}\\
- 4 \lambda_{18} & - 3 \lambda_{16} & 0 & \wp_{5,7} \\
\end{pmatrix}
\begin{pmatrix}
\mathcal{L}_1\\
\mathcal{L}_3\\
\mathcal{L}_5\\
\mathcal{L}_7
\end{pmatrix}
 + {4 \over 9} 
\begin{pmatrix}
 2 \lambda_4\\
 3 \lambda_6\\
 4 \lambda_8\\
 5 \lambda_{10}\\
 6 \lambda_{12}\\
 7 \lambda_{14}\\
 8 \lambda_{16}\\
\end{pmatrix}
\mathcal{L}_3;\\
\begin{pmatrix} 
[\mathcal{L}_7, \mathcal{L}_2]\\
[\mathcal{L}_7, \mathcal{L}_4]\\
[\mathcal{L}_7, \mathcal{L}_6]\\
[\mathcal{L}_7, \mathcal{L}_8]\\
[\mathcal{L}_7, \mathcal{L}_{10}]\\ 
[\mathcal{L}_7, \mathcal{L}_{12}]\\
[\mathcal{L}_7, \mathcal{L}_{14}] 
\end{pmatrix}
&=
\begin{pmatrix}
\wp_{1,7} & 0 & 0 & 0\\
\wp_{3,7} &\wp_{1,7} &  0 & - 5 \lambda_4\\
\wp_{5,7} & \wp_{3,7} & \wp_{1,7} & - 4 \lambda_6 \\
\wp_{7,7} & \wp_{5,7} &  \wp_{3,7} &\wp_{1,7} - 3 \lambda_8\\
- \lambda_{16} & \wp_{7,7}&  \wp_{5,7} & \wp_{3,7} - 2 \lambda_{10}\\
- 2 \lambda_{18} & -\lambda_{16} & \wp_{7,7} &  \wp_{5,7} - \lambda_{12}\\
0 & - 2 \lambda_{18} & - \lambda_{16} & \wp_{7,7} \\
\end{pmatrix}
\begin{pmatrix}
\mathcal{L}_1\\
\mathcal{L}_3\\
\mathcal{L}_5\\
\mathcal{L}_7
\end{pmatrix}
 + {2 \over 9} 
\begin{pmatrix}
 2 \lambda_4\\
 3 \lambda_6\\
 4 \lambda_8\\
 5 \lambda_{10}\\
 6 \lambda_{12}\\
 7 \lambda_{14}\\
 8 \lambda_{16}\\
\end{pmatrix}
\mathcal{L}_5.
\end{align*}
\end{lem}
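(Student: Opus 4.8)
The plan is to exploit that the four odd-indexed fields are constant-coefficient operators, $\mathcal{L}_1 = \partial_1$, $\mathcal{L}_3 = \partial_3$, $\mathcal{L}_5 = \partial_5$, $\mathcal{L}_7 = \partial_7$, so that every commutator $[\mathcal{L}_k, \mathcal{L}_{2m}]$ with $k \in \{1,3,5,7\}$ reduces to applying the single derivative $\partial_k$ to the coefficient functions of $\mathcal{L}_{2m}$. First I would write $\mathcal{L}_{2m} = L_{2m} + \sum_i a^{(2m)}_i(z,\lambda)\,\partial_i$, where $L_{2m}$ is the $\lambda$-field of \eqref{Lk} and $i$ runs over $\{1,3,5,7\}$. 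Since $L_{2m}$ involves only the variables $\lambda$ and the derivatives $\partial/\partial\lambda_{2s}$, it commutes with $\partial_k$; and since the $\partial_i$ mutually commute, $[\partial_k, a^{(2m)}_i\partial_i] = (\partial_k a^{(2m)}_i)\,\partial_i$. Hence $[\mathcal{L}_k, \mathcal{L}_{2m}] = \sum_i (\partial_k a^{(2m)}_i)\,\partial_i$. This single structural observation turns the entire lemma into the differentiation of the explicitly listed coefficients.

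Next I would split each coefficient $a^{(2m)}_i$ into the two kinds that occur in the definitions of the $\mathcal{L}_{2m}$. The first kind are the terms $-\zeta_j\partial_i$: by \eqref{note} one has $\wp_{k,j} = -\partial_k\partial_j\ln\sigma = -\partial_k\zeta_j$, so $\partial_k(-\zeta_j) = \wp_{k,j}$, and, using the symmetry $\wp_{k,j} = \wp_{j,k}$, these contributions produce exactly the $\wp$-entries of the stated matrices and their banded, shift-invariant placement. The second kind are the explicit polynomials that are linear in $z$ with $\lambda$-dependent coefficients; since $\partial_k z_i = \delta_{ki}$, applying $\partial_k$ merely extracts the coefficient of $z_k$, giving the constant and $\lambda$-valued entries. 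The zero entries appear precisely where the relevant coefficient of $\mathcal{L}_{2m}$ contains neither a $\zeta_j$ nor a monomial in $z_k$, which happens for weight reasons (for instance $\mathcal{L}_{10}$ carries no $\zeta$-term in the $\partial_1$ slot because $\zeta_9$ is absent in genus $4$). Carrying this out row by row for $m = 1,\ldots,7$ and each $k \in \{1,3,5,7\}$ yields the full list of coefficients of $\partial_1, \partial_3, \partial_5, \partial_7$.

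Finally I would reassemble these coefficients into the claimed form: the $\partial_i$ are rewritten as $\mathcal{L}_i$, the generic $\wp$- and shift-entries are collected into the displayed banded matrix acting on the column $(\mathcal{L}_1,\mathcal{L}_3,\mathcal{L}_5,\mathcal{L}_7)^{\top}$, and the residual $\lambda$-linear part arising from the $\lambda_4$- (respectively $\lambda_6$-) terms in the coefficients of $\mathcal{L}_{2m}$ is split off as the rank-one correction $\tfrac{1}{3}(\,\cdots\,)\mathcal{L}_1$, $\tfrac{4}{9}(\,\cdots\,)\mathcal{L}_3$, $\tfrac{2}{9}(\,\cdots\,)\mathcal{L}_5$ — the split being fixed by the requirement that the matrix part keep its clean Toeplitz-type structure. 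As a running check I would verify weight homogeneity under the grading of \eqref{L0}: with $\wt\wp_{i,j}=i+j$, $\wt\lambda_{2s}=2s$, $\wt\mathcal{L}_i=i$, every term of $[\mathcal{L}_k,\mathcal{L}_{2m}]$ must carry weight $k+2m$, which pins down the index of each $\wp_{i,j}$ and $\lambda_{2s}$. The only real obstacle is the sheer volume of bookkeeping and the correct placement of each entry in the four matrices; there is no conceptual difficulty, since all data are explicit and $\partial_k$ acts diagonally on the two kinds of coefficient.
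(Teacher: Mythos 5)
Your proposal is correct and is exactly the computation the paper intends by its one-line proof ``follows from the explicit expressions for $\mathcal{L}_k$'': since $\mathcal{L}_k=\partial_k$ commutes with the $\lambda$-part $L_{2m}$, the commutator reduces to applying $\partial_k$ to the coefficients, with $\partial_k(-\zeta_j)=\wp_{k,j}$ producing the $\wp$-entries and the $z$-linear terms producing the constant and $\lambda$-entries (e.g.\ for $[\mathcal{L}_3,\mathcal{L}_2]$ one gets $\wp_{1,3}+\tfrac{4}{3}\lambda_4$, matching $(\wp_{1,3}-\lambda_4)+\tfrac{7}{3}\lambda_4$ in the stated splitting). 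The weight check under \eqref{L0} is a sound additional safeguard; nothing in the argument differs in substance from the paper.
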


\textit{The proof} follows from the explicit expressions for $\mathcal{L}_k$.

\begin{lem} \label{l24}
For the commutators in the Lie algebra $\mathscr{L}$ we have the relations:
\begin{align*}
\begin{pmatrix}
[\mathcal{L}_2, \mathcal{L}_4]\\
[\mathcal{L}_2, \mathcal{L}_6]\\
[\mathcal{L}_2, \mathcal{L}_8]\\
[\mathcal{L}_2, \mathcal{L}_{10}]\\ 
[\mathcal{L}_2, \mathcal{L}_{12}]\\
[\mathcal{L}_2, \mathcal{L}_{14}] 
\end{pmatrix} & =
\begin{pmatrix}
 c_{2,2j}^{2s}(\lambda) 
\end{pmatrix}
\begin{pmatrix}
\mathcal{L}_{2k}
\end{pmatrix} + {1 \over 2}
\begin{pmatrix}
- \wp_{1,1,3} & \wp_{1,1,1} & 0 & 0 \\
- \wp_{1,3,3} - \wp_{1,1,5} & \wp_{1,1,3} & \wp_{1,1,1} & 0 \\
- 2 \wp_{1,3,5} - \wp_{1,1,7} & \wp_{1,1,5} & \wp_{1,1,3} & \wp_{1,1,1} \\
- 2 \wp_{1,3,7} - \wp_{1,5,5} & \wp_{1,1,7} & \wp_{1,1,5} & \wp_{1,1,3} \\
- 2 \wp_{1,5,7} & 0 & \wp_{1,1,7} & \wp_{1,1,5} \\
- \wp_{1,7,7} & 0 & 0 & \wp_{1,1,7} 
\end{pmatrix}
\begin{pmatrix}
\mathcal{L}_1\\
\mathcal{L}_3\\
\mathcal{L}_5\\
\mathcal{L}_7
\end{pmatrix}.
\end{align*}
\end{lem}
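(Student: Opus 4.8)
The plan is to compute each bracket $[\mathcal L_2,\mathcal L_{2j}]$ directly from the explicit formulas, separating every operator into its two natural parts. Write $\mathcal L_{2k}=L_{2k}+A_{2k}$, where $L_{2k}$ is the field \eqref{Lk} in $\mathcal B$ acting only on the variables $\lambda$, and $A_{2k}=\sum_{i\in\{1,3,5,7\}}a^{(k)}_i\,\partial_i$ gathers the remaining terms, whose coefficients $a^{(k)}_i$ are the $\zeta$'s together with expressions linear in $z$ with $\lambda$-polynomial coefficients. Since the coefficients of $L_{2k}$ do not depend on $z$, we have $[L_{2k},\partial_i]=0$, so the bracket expands as
\[
[\mathcal L_2,\mathcal L_{2j}]=[L_2,L_{2j}]+[L_2,A_{2j}]+[A_2,L_{2j}]+[A_2,A_{2j}],
\]
and each of the last three terms is again a field in the directions $\partial_1,\partial_3,\partial_5,\partial_7$ only. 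The purely $\lambda$-component is therefore $[L_2,L_{2j}]=\sum_s c^{2s}_{2,2j}(\lambda)\,L_{2s}$ by \eqref{cijs} and Lemma~\ref{lg4}; writing $L_{2s}=\mathcal L_{2s}-A_{2s}$ reproduces the asserted term $\sum_s c^{2s}_{2,2j}(\lambda)\mathcal L_{2s}$ and reduces the Lemma to the identity of $z$-fields
\[
[L_2,A_{2j}]+[A_2,L_{2j}]+[A_2,A_{2j}]-\sum_s c^{2s}_{2,2j}(\lambda)\,A_{2s}=\tfrac12\sum_k M_{jk}\,\partial_k,
\]
where $M$ denotes the matrix of third-order functions $\wp_{1,a,b}$ appearing in the statement.

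I would then evaluate the three brackets on the left. The term $[A_2,A_{2j}]$ is an ordinary commutator of $z$-fields, computed entirely from the elementary rules $\partial_i\zeta_k=-\wp_{i,k}$, $\partial_i\wp_{k,l}=-\wp_{i,k,l}$, and the trivial $z$- and $\lambda$-derivatives of the polynomial coefficients. The remaining two are $[L_2,A_{2j}]=\sum_i(L_2 a^{(j)}_i)\partial_i$ and $[A_2,L_{2j}]=-\sum_i(L_{2j}a^{(2)}_i)\partial_i$: on the polynomial part of the $a^{(k)}_i$ the action of $L_2,L_{2j}$ is immediate from \eqref{Lk}, while on the $\zeta$-part it requires the derivatives $L_{2m}\zeta_k$. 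This is the one ingredient not reducible to differentiation in $z$, and supplying it is the crux that separates this Lemma from Lemma~\ref{l23}. It is obtained from the results of \cite{BB19}: the explicit polynomial $\mathcal L_{2m}\ln\sigma$ gives, via $\mathcal L_{2m}\zeta_k=\partial_k(\mathcal L_{2m}\ln\sigma)+[\mathcal L_{2m},\partial_k]\ln\sigma$ together with the brackets $[\mathcal L_{2m},\partial_k]$ already recorded in Lemma~\ref{l23}, a closed expression for $\mathcal L_{2m}\zeta_k$, and hence for $L_{2m}\zeta_k=\mathcal L_{2m}\zeta_k+\sum_i a^{(m)}_i\wp_{i,k}$.

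With $L_{2m}\zeta_k$ in hand the identity of $z$-fields becomes a finite substitution, and the main obstacle is the bookkeeping of the ensuing cancellations: the $\zeta$-by-$\wp$ products produced by $[A_2,A_{2j}]$ must cancel against the $\zeta$-linear part of the $L_{2m}\zeta_k$ terms, while all second-order contributions $\wp_{k,l}$ and all purely polynomial contributions in $z,\lambda$ must cancel as well, leaving only the third-order combinations $\wp_{1,a,b}$. The structural reason behind these cancellations is that the surviving coefficients must be genuinely $\Gamma$-periodic, which excludes the quasi-periodic $\zeta$-linear terms and the polynomial $z$-terms; this periodicity heuristic is what singles out $\mathcal P$-coefficients of the form $\wp_{1,a,b}$.

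To keep the collection of terms under control and to obtain strong internal checks, I would use the weight grading throughout. Since $[\mathcal L_0,\mathcal L_{2k}]=2k\mathcal L_{2k}$, the bracket $[\mathcal L_2,\mathcal L_{2j}]$ is homogeneous of weight $2+2j$, so each entry $M_{jk}$ must have weight $2+2j-k$, which equals $\wt\wp_{1,a,b}=1+a+b$; row by row this restricts the admissible monomials and fixes the shape of every entry. Finally, the computed answer can be cross-checked by applying $\operatorname{ad}\partial_l$ for $l\in\{1,3,5,7\}$ and comparing, through the Jacobi identity, with the brackets of Lemma~\ref{l23}: differentiating the proposed right-hand side in $z$ is elementary and must agree with the $\partial_l$-derivatives of the left-hand side, after which $\mathcal L_0$-homogeneity removes the remaining $\lambda$-only ambiguity.
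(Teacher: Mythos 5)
Your proposal is correct and follows essentially the same route as the paper: the paper's proof likewise reduces everything to the known expressions for $\mathcal{L}_{2k}\zeta_s$ (imported from \cite{BB20}) and then verifies the stated brackets by direct computation with the explicit formulas for the $\mathcal{L}_k$, exactly your decomposition-and-substitution scheme. Your only deviation is to re-derive $\mathcal{L}_{2k}\zeta_s$ from the heat operators of \cite{BB19} via $\mathcal{L}_{2m}\zeta_k=\partial_k(\mathcal{L}_{2m}\ln\sigma)+[\mathcal{L}_{2m},\partial_k]\ln\sigma$ together with Lemma~\ref{l23}, which is a sound (non-circular, since Lemma~\ref{l23} is elementary) unpacking of precisely what \cite{BB20} supplies.
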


\begin{proof}
From \cite{BB20} we obtain the expressions for $\mathcal{L}_{2k} \zeta_s$, $k = 1,2,3,4,5,6,7$, $s = 1,3,5,7$.
Provided this, the proof follows from the explicit expressions for $\mathcal{L}_k$.
\end{proof}

\vfill
\eject

\section{Polynomial Lie algebra in $\mathbb{C}^{12}$} \label{c3}

\vspace{-2pt}

Following \cite[Chapter 5]{B3}, we consider the diagram $
 \xymatrix{
	\mathcal{U} \ar[d]^{\pi} \ar@{-->}[r]^{\varphi} & \mathbb{C}^{12} \ar[d]^{p}\\
	\mathcal{B} \ar@{^{(}->}[r] & \mathbb{C}^{8}\\
	}   \label{d}$.
	
Here $\pi: \mathcal{U} \to \mathcal{B}$ is the bundle described above, $\mathcal{B} \subset \mathbb{C}^{8}$ is the embedding given by the coordinates $\lambda$ for $g=4$. We determine the map $\varphi$ by the set of generators of~$\mathcal{F}$. Denote the coordinates in $\mathbb{C}^{12}$ by $x_{i,j}$, where $i \in \{ 1,2,3 \}$ and~$j \in \{1, 3, 5, 7 \}$, and $x_{i+1} = x_{i,1}$.
Then $\varphi$ is determined by the map (see the notation in \eqref{note})
\[
(z,\lambda) \mapsto \begin{pmatrix}
x_2 & x_{1,3} & x_{1,5} & x_{1,7}\\
x_3 & x_{2,3} & x_{2,5} & x_{2,7}\\
x_4 & x_{3,3} & x_{3,5} & x_{3,7}\\
\end{pmatrix} = \begin{pmatrix}
\wp_{1,1} & \wp_{1,3} & \wp_{1,5} & \wp_{1,7}\\
\wp_{1,1,1} & \wp_{1,1,3} & \wp_{1,1,5} & \wp_{1,1,7}\\
\wp_{1,1,1,1} & \wp_{1,1,1,3} & \wp_{1,1,1,5} & \wp_{1,1,1,7}
\end{pmatrix}.
\]
The polynomial map $p$ is determined by the relations \vspace{-3pt}
\begin{align*} 
\lambda_{4} &= - 3 x_{2}^2 + \frac{1}{2} x_{4} - 2 x_{1, 3}, \\
\lambda_{6} &= 2 x_2^3 + \frac{1}{4} x_{3}^2 - \frac{1}{2} x_{2} x_{4} - 2 x_{2} x_{1, 3} + \frac{1}{2} x_{3, 3} - 2 x_{1, 5}, \\
\lambda_{8} &= (4 x_{2}^2 + x_{1, 3}) x_{1, 3} - 2 x_{2} x_{1, 5} + \frac{1}{2} x_{3, 5} - 2 x_{1, 7} - \frac{1}{2} (x_{4} x_{1, 3} - x_{3} x_{2, 3} + x_{2} x_{3, 3}),\\
\lambda_{10} &= 2 x_{2} x_{1, 3}^2 + {1 \over 4} x_{2,3}^2 - \frac{1}{2} x_{1, 3} x_{3, 3} + 2 \left( 2 x_{2}^2 + x_{1, 3} \right) x_{1, 5} - 2 x_{2} x_{1, 7} + \frac{1}{2} x_{3, 7} - \\ & \qquad - 
\frac{1}{2} (x_{4} x_{1, 5} - x_{3} x_{2, 5} + x_{2} x_{3, 5} ),\\
\lambda_{12} &= 4 x_2 x_{1, 3} x_{1, 5} + x_{1, 5}^2 + \left(4 x_{2}^2 + 2 x_{1, 3}\right) x_{1, 7} - \\
& \qquad - \frac{1}{2} \left( x_{3, 3} x_{1, 5} - x_{2,3} x_{2,5}
+ x_{1, 3}x_{3, 5} + x_{4} x_{1, 7} - x_{3} x_{2, 7} + x_{2} x_{3, 7}\right), \\
\lambda_{14} &= 2 x_2 x_{1, 5}^2 + {1 \over 4} x_{2,5}^2 - \frac{1}{2} x_{1, 5} x_{3, 5} + \left( 4 x_2 x_{1,3} + 2 x_{1,5}\right) x_{1,7} - \\
& \qquad - \frac{1}{2} \left( x_{3,3} x_{1,7} - x_{2,3} x_{2,7} + x_{1,3} x_{3,7} \right), \\
\lambda_{16} &= 4 x_2 x_{1, 5} x_{1, 7} + x_{1,7}^2 - \frac{1}{2} \left(x_{3, 5} x_{1,7} - x_{2,5} x_{2,7} +  x_{1,5} x_{3, 7}\right), \\
\lambda_{18} &= 2 x_2 x_{1, 7}^2 + {1 \over 4} x_{2,7}^2 - \frac{1}{2} x_{1,7} x_{3, 7}. 
\end{align*} 
We also have \cite[Proof of Theorem 5.3]{B3} the formulas for $w_{i,j} = \wp_{i,j}$, where $i, j \in {3,5,7}$: \vspace{-3pt}
\begin{align*} 
w_{3, 3} &= 3 x_{2} x_{1, 3} - \frac{1}{2} x_{3, 3} + 3 x_{1, 5}, \\
w_{3, 5} &= 3 x_{2} x_{1, 5} - \frac{1}{2} x_{3, 5} + 3 x_{1, 7}, \\
w_{3, 7} &= 3 x_{2} x_{1, 7} - \frac{1}{2} x_{3, 7},
\\
w_{5, 5} &= \frac{1}{2} (x_{4} x_{1, 5} - x_{3} x_{2, 5} + x_{2} x_{3, 5}) - \left(4 x_{2}^2 + x_{1, 3}\right) x_{1, 5} + 5 x_{2} x_{1, 7} - x_{3, 7},\\
w_{5, 7} &= \frac{1}{2} (x_{4} x_{1,7} - x_{3} x_{2,7} + x_{2} x_{3,7}) - \left(4 x_{2}^2 + x_{1, 3}\right) x_{1, 7},\\
w_{7,7} &= \frac{1}{2} \left( x_{3,3} x_{1,7} - x_{2,3} x_{2,7} + x_{1,3} x_{3,7} \right) - \left( 4 x_2 x_{1,3} + x_{1,5}\right) x_{1,7}.
\end{align*} 

Next we introduce explicitly a set of polynomial vector fields in $\mathbb{C}^{12}$.

We have the polynomial vector fields (see \cite[Lemma 6.2]{B3}):
\begin{align*}
\mathcal{D}_0 &= 2 x_{2} {\partial \over \partial x_{2}} + 3 x_{3} {\partial \over \partial x_{3}} +
4 x_{4} {\partial \over \partial x_{4}} +
4 x_{1,3} {\partial \over \partial x_{1,3}} + 5 x_{2,3} {\partial \over \partial x_{2,3}} +
6 x_{3,3} {\partial \over \partial x_{3,3}} +
\\ & \qquad  + 6 x_{1,5} {\partial \over \partial x_{1,5}} + 7 x_{2,5} {\partial \over \partial x_{2,5}} +
8 x_{3,5} {\partial \over \partial x_{3,5}} +
8 x_{1,7} {\partial \over \partial x_{1,7}} + 9 x_{2,7} {\partial \over \partial x_{2,7}} +
10 x_{3,7} {\partial \over \partial x_{3,7}};
\\
\mathcal{D}_1 &= x_{3} {\partial \over \partial x_{2}} + x_{4} {\partial \over \partial x_{3}} + 4 (3 x_{2} x_{3} + x_{2,3}) {\partial \over \partial x_{4}}
+ \\
& \qquad + \sum_{s = 3,5,7} x_{2,s} {\partial \over \partial x_{1,s}} + x_{3,s} {\partial \over \partial x_{2,s}} + 4 (x_{3} x_{1,s} + 2 x_{2} x_{2,s} + x_{2,s+2}) {\partial \over \partial x_{3,s}};
\end{align*}
where $x_{2,9} = 0$.
We denote by $x_{3,k}'$ the expression $\mathcal{D}_1(x_{3,k})$, by $w_{k,s}'$ the~expression $\mathcal{D}_1(w_{k,s})$, by $w_{k,s}''$ the~expression $\mathcal{D}_1(\mathcal{D}_1(w_{k,s}))$, and by $w_{k,s}'''$ the expression $\mathcal{D}_1(\mathcal{D}_1(\mathcal{D}_1(w_{k,s})))$.
Then we~have the~polynomial vector fields (see \cite[Lemma 6.3]{B3}) for $k = 3,5,7$:
\begin{align*}
\mathcal{D}_{k} &= x_{2,k} {\partial \over \partial x_{2}} + x_{3,k} {\partial \over \partial x_{3}} 
+ x_{3,k}' {\partial \over \partial x_{4}} +
\sum_{s = 3,5,7} w_{k,s}' {\partial \over \partial x_{1,s}} + w_{k,s}'' {\partial \over \partial x_{2,s}}
+ w_{k,s}''' {\partial \over \partial x_{3,s}}.
\end{align*}

Consider the vector fields $\mathcal{D}_2$ and $\mathcal{D}_4$ determined by the conditions 
\begin{align*}
\mathcal{D}_2 x_2 &= - x_2^2 + {1 \over 2} x_4 + 2 x_{1,3} + {7 \over 9} \lambda_4;\\
\mathcal{D}_2 x_{1,3} &= - x_2 x_{1,3} + {1 \over 2} x_{3,3} + 3 x_{1,5} - {4 \over 3} \lambda_4 x_2 + w_{3,3};\\
\mathcal{D}_2 x_{1,5} &= - x_2 x_{1,5} + {1 \over 2} x_{3,5} + 5 x_{1,7} - {8 \over 9} \lambda_4 x_{1,3} + w_{3,5};\\
\mathcal{D}_2 x_{1,7} &= - x_2 x_{1,7} + {1 \over 2} x_{3,7} - {4 \over 9} \lambda_4 x_{1,5} + w_{3,7};\\
\mathcal{D}_4 x_2 &= - 2 x_2 x_{1,3}  + x_{3,3} + 2 x_{1,5} + {2 \over 3} \lambda_6;\\
\mathcal{D}_4 x_{1,3} &= - x_{1,3}^2 + 3 x_{1,7} + \lambda_4 x_{1,3} - 2 \lambda_6 x_2 - \lambda_8 - x_2 w_{3,3} + w_{3,3}'' + w_{3,5};\\
\mathcal{D}_4 x_{1,5} &= - x_{1,3} x_{1,5} + 3 \lambda_4 x_{1,5} - {4 \over 3} \lambda_6 x_{1,3} - x_2 w_{3,5} + w_{3,5}'' + w_{5,5};\\
\mathcal{D}_4 x_{1,7} &= - x_{1,3} x_{1,7} + 5 \lambda_4 x_{1,7} - {2 \over 3} \lambda_6 x_{1,5} - x_2 w_{3,7} + w_{3,7}'' + w_{5,7};
\end{align*}
and by the relations $
\begin{pmatrix} 
[\mathcal{D}_1, \mathcal{D}_2]\\
[\mathcal{D}_1, \mathcal{D}_4]\\
\end{pmatrix}
=
\begin{pmatrix}
x_2 & - 1 & 0\\
x_{1,3} & x_2 & - 1\\
\end{pmatrix}
\begin{pmatrix}
\mathcal{D}_1\\
\mathcal{D}_3\\
\mathcal{D}_5
\end{pmatrix}.
$

We set (see Corollary \ref{c13} and Lemma \ref{l24}):
\begin{align*}
\mathcal{D}_6 &= {1 \over 4} \left(2 [\mathcal{D}_2, \mathcal{D}_4] + x_{2, 3} \mathcal{D}_1 - x_{3} \mathcal{D}_3 \right) - {4 \over 3 } \left( \lambda_{6} \mathcal{D}_0 - \lambda_4 \mathcal{D}_2\right);\\
\mathcal{D}_8 &= {1 \over 8} \left(2 [\mathcal{D}_2, \mathcal{D}_6] + (w_{3, 3}' + x_{2,5}) \mathcal{D}_1 - x_{2,3} \mathcal{D}_3 - x_{3} \mathcal{D}_5)\right) - {5 \over 9 } \left( \lambda_{8} \mathcal{D}_0 - \lambda_4 \mathcal{D}_4\right);
\\
\mathcal{D}_{10} &= {1 \over 12} \left(2 [\mathcal{D}_2, \mathcal{D}_8] + ( 2 w_{3,5}'+ x_{2,7}) \mathcal{D}_1 - x_{2,5} \mathcal{D}_3 - x_{2,3} \mathcal{D}_5 - x_{2,1} \mathcal{D}_7\right) - {8 \over 27 } \left( \lambda_{10} \mathcal{D}_0 - \lambda_4 \mathcal{D}_6\right);
\\
\mathcal{D}_{12} &= {1 \over 16} \left(2 [\mathcal{D}_2, \mathcal{D}_{10}] + (2 w_{3,7}' + w_{5,5}') \mathcal{D}_1 - x_{2,7} \mathcal{D}_3 - x_{2,5} \mathcal{D}_5 - x_{2,3} \mathcal{D}_7\right) - { 1 \over 6 } \left( \lambda_{12} \mathcal{D}_0 - \lambda_4 \mathcal{D}_8\right);\\
\mathcal{D}_{14} &= {1 \over 20} \left(2 [\mathcal{D}_2, \mathcal{D}_{12}] + 2 w_{5, 7}' \mathcal{D}_1 - x_{2,7} \mathcal{D}_5 - x_{2,5} \mathcal{D}_7 \right) - {4 \over 45 } \left( \lambda_{14} \mathcal{D}_0 - \lambda_4 \mathcal{D}_{10}\right).
\end{align*}

Denote the ring of polynomials in $\lambda \in \mathbb{C}^{8}$ by $P$.
Let us consider the polynomial map~$p\colon \mathbb{C}^{12} \to \mathbb{C}^{8}$.
A vector field $\mathcal{D}$ in $\mathbb{C}^{12}$ will be called projectable for $p$ if there exists a vector field $L$ in $\mathbb{C}^{8}$ such that
\[
 \mathcal{D}(p^* f) = p^* L(f) \quad \text{for any} \quad f \in P.
\]
The vector field $L$ will be called the pushforward of $\mathcal{D}$.

\begin{lem}
The vector fields $\mathcal{D}_s$, $s=1,3,5,7$ are projectable for $p$ with trivial pushforwards, i.e.
\[
\mathcal{D}_s(f) = 0, \qquad s=1,3,5,7, \quad \text{for any} \quad f \in P.
\]
The vector fields $\mathcal{D}_{2k}$, $k=0,1,2,3,4,5,6,7$, are projectable for $p$ and~their pushforwards are $L_{2k}$, i.e.
\[
 \mathcal{D}_{2k}(p^* f) = p^* L_{2k}(f), \qquad k = 0,1,2,3,4,5,6,7, \quad \text{for any} \quad f \in P.
\]
\end{lem}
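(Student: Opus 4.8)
The plan is to verify every pushforward identity only on the generators of $P$. Since $P$ is the polynomial ring generated by $\lambda_4, \lambda_6, \ldots, \lambda_{18}$, and since $p^*$ is a ring homomorphism while each $\mathcal{D}_s$ and each $L_{2k}$ is a derivation, the Leibniz rule shows that $\mathcal{D}(p^*f) = p^* L(f)$ holds for all $f \in P$ as soon as it holds for every generator $\lambda_{2m}$. Thus it suffices to compute $\mathcal{D}_s(p^*\lambda_{2m})$ for $s \in \{1,3,5,7\}$ and $\mathcal{D}_{2k}(p^*\lambda_{2m})$ for $k = 0,\ldots,7$, and to compare the latter with $p^* L_{2k}(\lambda_{2m}) = p^* T_{2k+2,\,2m-2}$, which is read off from \eqref{Lk}.

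I record two elementary principles used throughout. First, if $\mathcal{D}, \mathcal{D}'$ are projectable with pushforwards $L, L'$, then $[\mathcal{D}, \mathcal{D}']$ is projectable with pushforward $[L, L']$, since $[\mathcal{D},\mathcal{D}'](p^*f) = \mathcal{D}(p^*L'f) - \mathcal{D}'(p^*Lf) = p^*[L,L'](f)$. Second, if $\mathcal{V}$ has trivial pushforward and $h$ is any polynomial on $\mathbb{C}^{12}$, then $h\mathcal{V}$ is projectable with trivial pushforward; while if $h = p^*\phi$ is a pullback and $\mathcal{D}$ is projectable with pushforward $L$, then $h\mathcal{D}$ is projectable with pushforward $\phi L$. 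Consequently the projectable fields form a Lie algebra on which $\mathcal{D} \mapsto L$ is a homomorphism, and the trivial-pushforward fields form an ideal closed under multiplication by arbitrary polynomials.

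The base of the computation is genus-four specific. First I would substitute the explicit expression for $\mathcal{D}_1$ into each of the eight polynomial formulas for $\lambda_{2m}$ and check directly that $\mathcal{D}_1(p^*\lambda_{2m}) = 0$, making $\mathcal{D}_1$ vertical. Next, using the determining conditions for $\mathcal{D}_2$ and $\mathcal{D}_4$ together with the $w$-formulas, I would verify $\mathcal{D}_2(p^*\lambda_{2m}) = p^* T_{4,\,2m-2}$ and $\mathcal{D}_4(p^*\lambda_{2m}) = p^* T_{6,\,2m-2}$ for every generator, so that $\mathcal{D}_2, \mathcal{D}_4$ are projectable with pushforwards $L_2, L_4$. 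The field $\mathcal{D}_0$ is the grading operator, and $p^*\lambda_{2m}$ is weighted-homogeneous of weight $2m$, so $\mathcal{D}_0(p^*\lambda_{2m}) = 2m\, p^*\lambda_{2m} = p^* L_0(\lambda_{2m})$ by \eqref{L0}, giving pushforward $L_0$. The verticality of $\mathcal{D}_3$ and $\mathcal{D}_5$ then follows without further substitution from the stated relations $[\mathcal{D}_1, \mathcal{D}_2] = x_2\mathcal{D}_1 - \mathcal{D}_3$ and $[\mathcal{D}_1, \mathcal{D}_4] = x_{1,3}\mathcal{D}_1 + x_2\mathcal{D}_3 - \mathcal{D}_5$: since $\mathcal{D}_1$ is vertical and $\mathcal{D}_2, \mathcal{D}_4$ are projectable, the commutators $[\mathcal{D}_1,\mathcal{D}_2]$ and $[\mathcal{D}_1,\mathcal{D}_4]$ have trivial pushforward, and solving for $\mathcal{D}_3, \mathcal{D}_5$ exhibits them as polynomial combinations of vertical fields. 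Verticality of $\mathcal{D}_7$ follows in the same way once $\mathcal{D}_6$ is known to be projectable, or by a direct substitution using the explicit expression for $\mathcal{D}_7$.

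Finally, the higher even fields are handled by induction using Corollary \ref{c13}. Comparing the recursive definitions of $\mathcal{D}_6, \mathcal{D}_8, \ldots, \mathcal{D}_{14}$ with that corollary, I observe that each $\mathcal{D}_{2k}$ equals $\frac{1}{2(k-2)}[\mathcal{D}_2, \mathcal{D}_{2k-2}] + \frac{2(k-9)}{9(k-2)}\bigl(p^*\lambda_{2k}\,\mathcal{D}_0 - p^*\lambda_4\,\mathcal{D}_{2k-4}\bigr)$ plus a polynomial combination of the vertical fields $\mathcal{D}_1, \mathcal{D}_3, \mathcal{D}_5, \mathcal{D}_7$. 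Assuming inductively that $\mathcal{D}_2, \mathcal{D}_{2k-2}, \mathcal{D}_{2k-4}$ are projectable with pushforwards $L_2, L_{2k-2}, L_{2k-4}$ and that the odd fields are vertical, the two principles above show that $\mathcal{D}_{2k}$ is projectable with pushforward $\frac{1}{2(k-2)}[L_2, L_{2k-2}] + \frac{2(k-9)}{9(k-2)}(\lambda_{2k}L_0 - \lambda_4 L_{2k-4})$, which is exactly $L_{2k}$ by Corollary \ref{c13}. The main obstacle is the genus-four base computation: the direct verification that $\mathcal{D}_1$ annihilates all eight $p^*\lambda_{2m}$ and that $\mathcal{D}_2, \mathcal{D}_4$ reproduce the entries $T_{4,\,2m-2}$ and $T_{6,\,2m-2}$, which requires expanding the bulky expressions for the $\lambda_{2m}$ and for the auxiliary quantities $w_{k,s}$ and their $\mathcal{D}_1$-derivatives; once these identities are secured, everything else is formal.
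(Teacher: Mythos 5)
Your proposal is correct, and it organizes the verification genuinely differently from the paper. The paper's proof is pure brute force: since every $\mathcal{D}_s$ is (after unwinding the recursive definitions of $\mathcal{D}_6,\ldots,\mathcal{D}_{14}$) an explicit polynomial vector field, one applies each of the twelve fields to the eight generators $p^*\lambda_{2m}$ and compares with $p^*T_{2k+2,\,2m-2}$ read off from \eqref{Lk}. You instead confine the computation to a small base --- $\mathcal{D}_1$ vertical, $\mathcal{D}_0$ via weighted homogeneity of the $p^*\lambda_{2m}$, $\mathcal{D}_2$ and $\mathcal{D}_4$ with pushforwards $L_2$, $L_4$, and $\mathcal{D}_7$ vertical --- and derive the rest formally: $\mathcal{D}_3$ and $\mathcal{D}_5$ are vertical because, by the very definition of $\mathcal{D}_2$ and $\mathcal{D}_4$, they are polynomial combinations of $\mathcal{D}_1$, $[\mathcal{D}_1,\mathcal{D}_2]$, $[\mathcal{D}_1,\mathcal{D}_4]$; and the even fields $\mathcal{D}_6,\ldots,\mathcal{D}_{14}$ follow inductively because their defining recursions reproduce exactly the coefficients $\tfrac{1}{2(k-2)}$ and $\tfrac{2(k-9)}{9(k-2)}$ of Corollary~\ref{c13} (these do match: $\tfrac{1}{2},\tfrac{1}{4},\tfrac{1}{6},\tfrac{1}{8},\tfrac{1}{10}$ for the brackets and $\tfrac{4}{3},\tfrac{5}{9},\tfrac{8}{27},\tfrac{1}{6},\tfrac{4}{45}$ for the correction terms), modulo polynomial multiples of vertical fields. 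This buys a substantial reduction of the calculation, makes transparent why the paper's recursive definitions carry precisely those coefficients, and the formal part (pushforward respects brackets, vertical fields form an ideal under multiplication by arbitrary polynomials, pullback coefficients push forward) generalizes verbatim to higher genus; what the paper's blanket calculation buys is complete independence from Corollary~\ref{c13} and from any bookkeeping about which coefficients are pullbacks. One point to settle carefully in your write-up: verticality of $\mathcal{D}_7$ must be in place before the $\mathcal{D}_{10}$, $\mathcal{D}_{12}$, $\mathcal{D}_{14}$ steps, since it enters through $x_{2,1}\mathcal{D}_7$, $x_{2,3}\mathcal{D}_7$, $x_{2,5}\mathcal{D}_7$; of your two suggested routes, only the direct substitution into $p^*\lambda_{2m}$ is self-contained, because the relation $[\mathcal{D}_1,\mathcal{D}_6]=x_{1,5}\mathcal{D}_1+x_{1,3}\mathcal{D}_3+x_2\mathcal{D}_5-\mathcal{D}_7$ is not part of the definition of $\mathcal{D}_6$ but is a row of Lemma~\ref{d4}, itself established by a direct calculation (so using it is not circular, but it re-imports a computation of the same kind you were trying to avoid).
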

\begin{proof}
The formulas above define all the expressions involved explicitly, so the check is a direct calculation.
\end{proof}

We denote the polynomial Lie algebra generated by the vector fields $\mathcal{D}_{0}$, $\mathcal{D}_{1}$, $\mathcal{D}_{2}$, $\mathcal{D}_{3}$, $\mathcal{D}_{4}$, $\mathcal{D}_{5}$, $\mathcal{D}_{6}$, $\mathcal{D}_{7}$, $\mathcal{D}_{8}$, $\mathcal{D}_{10}$, $\mathcal{D}_{12}$, and $\mathcal{D}_{14}$ by $\mathscr{D}$.

\begin{lem} \label{d4}
For the commutators in the Lie algebra $\mathscr{D}$ we have the relations:
\begin{align*}
\begin{pmatrix}
[\mathcal{D}_1, \mathcal{D}_2]\\
[\mathcal{D}_1, \mathcal{D}_4]\\
[\mathcal{D}_1, \mathcal{D}_6]\\
[\mathcal{D}_1, \mathcal{D}_8]\\
[\mathcal{D}_1, \mathcal{D}_{10}]\\ 
[\mathcal{D}_1, \mathcal{D}_{12}]\\
[\mathcal{D}_1, \mathcal{D}_{14}] 
\end{pmatrix}
&=
\begin{pmatrix}
x_2 & - 1 & 0 & 0\\
x_{1,3} & x_2 & - 1 & 0\\
x_{1,5} & x_{1,3} & x_2 & - 1\\
x_{1,7} & x_{1,5} & x_{1,3} & x_2\\
0 & x_{1,7} & x_{1,5} & x_{1,3}\\
0 & 0 & x_{1,7} & x_{1,5}\\
0 & 0 & 0 & x_{1,7}\\
\end{pmatrix}
\begin{pmatrix}
\mathcal{D}_1\\
\mathcal{D}_3\\
\mathcal{D}_5\\
\mathcal{D}_7
\end{pmatrix};
\\
\begin{pmatrix}
[\mathcal{D}_3, \mathcal{D}_2]\\
[\mathcal{D}_3, \mathcal{D}_4]\\
[\mathcal{D}_3, \mathcal{D}_6]\\
[\mathcal{D}_3, \mathcal{D}_8]\\
[\mathcal{D}_3, \mathcal{D}_{10}]\\ 
[\mathcal{D}_3, \mathcal{D}_{12}]\\
[\mathcal{D}_3, \mathcal{D}_{14}] 
\end{pmatrix}
&=
\begin{pmatrix}
x_{1,3} - \lambda_4 & 0 & - 3 & 0\\
w_{3,3} & x_{1,3} - \lambda_4 & 0 & - 3\\
w_{3,5} & w_{3,3} & x_{1,3} - \lambda_4 & 0\\
w_{3,7} & w_{3,5} & w_{3,3} & x_{1,3} - \lambda_4\\
0 & w_{3,7} & w_{3,5} & w_{3,3}\\
0 & 0 & w_{3,7} & w_{3,5}\\
0 & 0 & 0 & w_{3,7}
\end{pmatrix}
\begin{pmatrix}
\mathcal{D}_1\\
\mathcal{D}_3\\
\mathcal{D}_5\\
\mathcal{D}_7
\end{pmatrix}
+ {1 \over 3} 
\begin{pmatrix}
 7 \lambda_4\\
 6 \lambda_6\\
 5 \lambda_8\\
 4 \lambda_{10}\\
 3 \lambda_{12}\\
 2 \lambda_{14}\\
 \lambda_{16}\\
\end{pmatrix}
\mathcal{D}_1;
\\
\begin{pmatrix}
[\mathcal{D}_5, \mathcal{D}_2]\\
[\mathcal{D}_5, \mathcal{D}_4]\\
[\mathcal{D}_5, \mathcal{D}_6]\\
[\mathcal{D}_5, \mathcal{D}_8]\\
[\mathcal{D}_5, \mathcal{D}_{10}]\\ 
[\mathcal{D}_5, \mathcal{D}_{12}]\\
[\mathcal{D}_5, \mathcal{D}_{14}] 
\end{pmatrix}
&=
\begin{pmatrix}
x_{1,5} & 0 & 0 & - 5\\
w_{3,5} & x_{1,5} & - 3 \lambda_4 & 0\\
w_{5,5} & w_{3,5} & x_{1,5} - 2 \lambda_6 & - 3 \lambda_4\\
w_{5,7} - \lambda_{12} &  w_{5,5} & w_{3,5} - \lambda_{8} & x_{1,5} - 2 \lambda_6\\
- 2 \lambda_{14} & w_{5,7} - \lambda_{12} & w_{5,5} & w_{3,5} - \lambda_{8}\\
- 3 \lambda_{16} & - 2 \lambda_{14} & w_{5,7} & w_{5,5}\\
- 4 \lambda_{18} & - 3 \lambda_{16} & 0 & w_{5,7} \\
\end{pmatrix}
\begin{pmatrix}
\mathcal{D}_1\\
\mathcal{D}_3\\
\mathcal{D}_5\\
\mathcal{D}_7
\end{pmatrix}
 + {4 \over 9} 
\begin{pmatrix}
 2 \lambda_4\\
 3 \lambda_6\\
 4 \lambda_8\\
 5 \lambda_{10}\\
 6 \lambda_{12}\\
 7 \lambda_{14}\\
 8 \lambda_{16}\\
\end{pmatrix}
\mathcal{D}_3;
\\
\begin{pmatrix} 
[\mathcal{D}_7, \mathcal{D}_2]\\
[\mathcal{D}_7, \mathcal{D}_4]\\
[\mathcal{D}_7, \mathcal{D}_6]\\
[\mathcal{D}_7, \mathcal{D}_8]\\
[\mathcal{D}_7, \mathcal{D}_{10}]\\ 
[\mathcal{D}_7, \mathcal{D}_{12}]\\
[\mathcal{D}_7, \mathcal{D}_{14}] 
\end{pmatrix}
&=
\begin{pmatrix}
x_{1,7} & 0 & 0 & 0\\
w_{3,7} & x_{1,7} &  0 & - 5 \lambda_4\\
w_{5,7} & w_{3,7} & x_{1,7} & - 4 \lambda_6 \\
w_{7,7} & w_{5,7} & w_{3,7} & x_{1,7} - 3 \lambda_8\\
- \lambda_{16} & w_{7,7} & w_{5,7} & w_{3,7} - 2 \lambda_{10}\\
- 2 \lambda_{18} & -\lambda_{16} & w_{7,7} &  w_{5,7} - \lambda_{12}\\
0 & - 2 \lambda_{18} & - \lambda_{16} & w_{7,7} \\
\end{pmatrix}
\begin{pmatrix}
\mathcal{D}_1\\
\mathcal{D}_3\\
\mathcal{D}_5\\
\mathcal{D}_7
\end{pmatrix}
 + {2 \over 9} 
\begin{pmatrix}
 2 \lambda_4\\
 3 \lambda_6\\
 4 \lambda_8\\
 5 \lambda_{10}\\
 6 \lambda_{12}\\
 7 \lambda_{14}\\
 8 \lambda_{16}\\
\end{pmatrix}
\mathcal{D}_5.
\end{align*}
\end{lem}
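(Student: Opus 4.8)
The plan is to verify the four families of relations by matching both sides on a small set of test functions, after first using a structural observation to control the shape of the answer. By the preceding Lemma the fields $\mathcal{D}_1, \mathcal{D}_3, \mathcal{D}_5, \mathcal{D}_7$ are vertical for $p$ (they annihilate every $p^* f$, $f \in P$), while each $\mathcal{D}_{2j}$ is projectable with pushforward $L_{2j}$. For a vertical field $V$ and a projectable field $W$ with pushforward $L$ one has $[V,W](p^*f) = V\bigl(W(p^*f)\bigr) - W\bigl(V(p^*f)\bigr) = V(p^* L f) - W(0) = 0$, so every commutator $[\mathcal{D}_{2i-1}, \mathcal{D}_{2j}]$ is again vertical. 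Consequently it is determined by its action on the fiber coordinates $x_2, x_{1,3}, x_{1,5}, x_{1,7}$ (these, together with $p^*\lambda_4, \ldots, p^*\lambda_{18}$, form a coordinate system on $\mathbb{C}^{12}$); and since the proposed right-hand sides are themselves polynomial combinations of the vertical fields $\mathcal{D}_1, \mathcal{D}_3, \mathcal{D}_5, \mathcal{D}_7$, it suffices to check equality after applying both sides to these four functions. This already explains why no even field can occur on the right.

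For the two columns indexed by $\mathcal{D}_2$ and $\mathcal{D}_4$ the check is direct. The relations $[\mathcal{D}_1,\mathcal{D}_2]$ and $[\mathcal{D}_1,\mathcal{D}_4]$ are built into the definitions of $\mathcal{D}_2$ and $\mathcal{D}_4$, supplying the first two rows of the first family. For $[\mathcal{D}_{2i-1},\mathcal{D}_2]$ and $[\mathcal{D}_{2i-1},\mathcal{D}_4]$ with $2i-1 = 3,5,7$ I would apply each pair in turn to $x_2, x_{1,3}, x_{1,5}, x_{1,7}$, using the explicit forms of $\mathcal{D}_3,\mathcal{D}_5,\mathcal{D}_7$ and the prescribed values $\mathcal{D}_2 x_2, \ldots, \mathcal{D}_2 x_{1,7}$ (and likewise for $\mathcal{D}_4$). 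This also requires the action of $\mathcal{D}_2$ and $\mathcal{D}_4$ on the remaining coordinates $x_3, x_4, x_{2,s}, x_{3,s}$; these are recovered from the four prescribed values by the prolongation forced by $[\mathcal{D}_1,\mathcal{D}_2] = x_2\mathcal{D}_1 - \mathcal{D}_3$ and $[\mathcal{D}_1,\mathcal{D}_4] = x_{1,3}\mathcal{D}_1 + x_2\mathcal{D}_3 - \mathcal{D}_5$, applied coordinatewise and solved using that $\mathcal{D}_1$ raises the first index.

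For $\mathcal{D}_{2k}$ with $k \geq 3$ I would argue by induction on $k$, using the recursive definitions, which (following Corollary \ref{c13}) express $\mathcal{D}_{2k}$ through $[\mathcal{D}_2,\mathcal{D}_{2k-2}]$, the Euler field $\mathcal{D}_0$, the field $\mathcal{D}_{2k-4}$, and explicit odd-field corrections. Applying $\mathcal{D}_{2i-1}$ and expanding by the Jacobi identity reduces $[\mathcal{D}_{2i-1},[\mathcal{D}_2,\mathcal{D}_{2k-2}]]$ to $[[\mathcal{D}_{2i-1},\mathcal{D}_2],\mathcal{D}_{2k-2}] + [\mathcal{D}_2,[\mathcal{D}_{2i-1},\mathcal{D}_{2k-2}]]$, where the inner commutators are furnished by the base columns and the induction hypothesis. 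The grading relations $[\mathcal{D}_0,\mathcal{D}_s] = s\mathcal{D}_s$ (valid since $\mathcal{D}_0$ is the Euler field and $\mathcal{D}_s$ is homogeneous of weight $s$) together with the verticality of the odd fields, which gives $\mathcal{D}_{2i-1}(p^*\lambda) = 0$, dispose of the $\mathcal{D}_0$- and $\lambda$-terms. Collecting the result as a polynomial combination of $\mathcal{D}_1,\mathcal{D}_3,\mathcal{D}_5,\mathcal{D}_7$ and comparing with the claimed matrices completes each step.

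The main obstacle is bookkeeping rather than conceptual difficulty. Because the recursion for $\mathcal{D}_{2k}$ reinjects $[\mathcal{D}_2,\mathcal{D}_{2k-2}]$ into every Jacobi expansion, the four columns are coupled: computing $[\mathcal{D}_{2i-1},\mathcal{D}_{2k}]$ draws on the established commutators $[\mathcal{D}_{2i-1},\mathcal{D}_2]$ and $[\mathcal{D}_{2i-1},\mathcal{D}_4]$ for \emph{all} odd indices at once, so the induction must carry the entire Lemma simultaneously at each value of $k$. In addition, the coefficients entering the definitions — the $w_{k,s}$ and their iterated $\mathcal{D}_1$-derivatives $w_{k,s}', w_{k,s}'', w_{k,s}'''$, and the $\lambda$'s viewed as polynomials in the $x$'s — must be tracked precisely to reproduce the exact entries, such as the shifts $x_{1,5} - 2\lambda_6$, $w_{3,5} - \lambda_8$, $w_{5,7} - \lambda_{12}$ and the correction terms $\tfrac{1}{3}(\ldots)\mathcal{D}_1$, $\tfrac{4}{9}(\ldots)\mathcal{D}_3$, $\tfrac{2}{9}(\ldots)\mathcal{D}_5$. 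Confirming that these match is the bulk of the labor, but it is a finite and fully determined computation.
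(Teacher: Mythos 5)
Your strategy is viable, and it is genuinely more structured than the paper's own proof, which consists of the single sentence ``\textit{The proof} is a direct calculation,'' i.e.\ brute-force expansion of all twenty-eight commutators of explicit polynomial vector fields on all twelve coordinates. Your two organizing ideas are real economies: first, since $\mathcal{D}_1,\mathcal{D}_3,\mathcal{D}_5,\mathcal{D}_7$ are vertical and each $\mathcal{D}_{2j}$ is projectable, every $[\mathcal{D}_{2i-1},\mathcal{D}_{2j}]$ is vertical, and the claimed right-hand sides are vertical as well, so the two sides need only be compared on the four functions $x_2,x_{1,3},x_{1,5},x_{1,7}$; second, the recursive definitions of $\mathcal{D}_6,\dots,\mathcal{D}_{14}$ (modelled on Corollary~\ref{c13}) let you produce the higher rows by the Jacobi identity from the $\mathcal{D}_2$- and $\mathcal{D}_4$-columns and the grading relations, instead of recomputing each field and each bracket from scratch. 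This also explains a priori why only $\mathcal{D}_1,\mathcal{D}_3,\mathcal{D}_5,\mathcal{D}_7$ occur on the right.

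Two load-bearing claims are missing, however, and must be supplied before this is a proof. (i)~Your parenthetical claim that $x_2,x_{1,3},x_{1,5},x_{1,7}$ together with $p^*\lambda_4,\dots,p^*\lambda_{18}$ ``form a coordinate system on $\mathbb{C}^{12}$'' is false as stated: all twelve of these polynomials are invariant under the involution $(x_3,x_{2,3},x_{2,5},x_{2,7})\mapsto(-x_3,-x_{2,3},-x_{2,5},-x_{2,7})$ (the algebraic shadow of $z\mapsto -z$, under which $\wp_{i,j}$ and $\wp_{i,j,k,l}$ are even while $\wp_{i,j,k}$ is odd), so the associated self-map of $\mathbb{C}^{12}$ is at least two-to-one. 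What your argument actually needs is weaker and true: the twelve differentials are linearly independent at a generic point, so that a \emph{polynomial} vertical field annihilating $x_2,x_{1,3},x_{1,5},x_{1,7}$ vanishes identically. That requires its own justification, e.g.\ checking that the $12\times 12$ Jacobian determinant is not the zero polynomial by evaluating it at one point. (ii)~Your induction step is incomplete: expanding $[\mathcal{D}_{2i-1},\mathcal{D}_{2k}]$ via the recursive definition produces, besides $[\mathcal{D}_{2i-1},[\mathcal{D}_2,\mathcal{D}_{2k-2}]]$ and the $\mathcal{D}_0$- and $\lambda$-terms you handle, the brackets of $\mathcal{D}_{2i-1}$ with the odd correction terms ($x_{2,3}\mathcal{D}_1 - x_3\mathcal{D}_3$ in $\mathcal{D}_6$, and so on), hence the odd--odd brackets $[\mathcal{D}_{2i-1},\mathcal{D}_{2j-1}]$. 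These are covered neither by the Lemma being proved nor by your induction hypothesis. They do vanish, but in the polynomial category that vanishing is itself a direct computation with the explicit fields (or an extra clause to be carried through the induction), and it must be stated and checked. With (i) and (ii) repaired, your argument is correct and is a better-organized route to Lemma~\ref{d4} than the unstructured calculation the paper invokes.
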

\textit{The proof} is a direct calculation.

\vfill
\eject

\section{Proof of Theorem \ref{t21}} \label{c4}

The polynomial Lie algebra $\mathscr{D}$ is a solution to \cite[Problem 6.1]{B3} for $g=4$, namely the polynomial vector fields $\mathcal{D}_s$, $s= 0,1,2,3,4,5,6,7,8,10,12,14$, are projectable for $p$ and independent at any point in $p^{-1}(\mathcal{B})$. In \cite[Chapter 6]{B3}, the connection of this problem with the problem of constructing the Lie algebra of derivations of $\mathcal{F}$ is described. Namely, a solution is given by the differential operators $\mathcal{L}_s$ such that 
\[
\mathcal{L}_s \varphi^* x_{i,j} = \varphi^* \mathcal{D}_s x_{i,j}
\]
for the coordinate functions $x_{i,j}$ in $\mathbb{C}^{12}$, see Section \ref{c3}.

By the construction (see \cite{B3}), we have $\mathcal{L}_s = \partial_s$ for $s = 1,3,5,7$ and $\mathcal{L}_0$ is the Euler vector field. This coincides with the operators presented in Section \ref{c2}. 
The vector fields~$\mathcal{L}_{2k}$, where $k = 2,3,4,5,6,7$, are determined by the conditions:
\begin{enumerate}
\item $\mathcal{L}_{2k} = L_{2k} + f_{2k,1}(u, \lambda) \partial_{1} + f_{2k,3}(z, \lambda) \partial_{3} + f_{2k,5}(z, \lambda) \partial_{5} + f_{2k,7}(z, \lambda) \partial_{7}$,
\item the fields $\mathcal{L}_{2k}$ satisfy commutation relations obtained by $\varphi^*$ from the commutation relations of Lemma \ref{d4}.
\end{enumerate}
By comparing Lemma \ref{d4} with Lemma \ref{l23} we see that for the operators from Section~\ref{c2} these conditions are satisfied.
The condition (2) determines the coefficients $f_{2k,j}(u, \lambda)$, where $j = 1,3,5,7$, up to constants \mbox{in $z_1$, $z_3$, $z_5$, $z_7$.}
The grading of the variables and the condition~$[\mathcal{L}_0, \mathcal{L}_k] = k \mathcal{L}_k$ determines these constants.
\hfill $\square$

\vfill

\eject

\end{document}